\newtheorem{remark}{Remark}
\newtheorem{theorem}{Theorem}
\newtheorem{lemma}{Lemma}
\newtheorem{definition}{Definition}
\newcommand{\byz}{BREA} % acronym
\DeclarePairedDelimiter{\floor}{\lfloor}{\rfloor}
\newcommand{\RNum}[1]{\uppercase\expandafter{\romannumeral #1\relax}}
\begin{document}

% \begin{titlepage}  
% \input{CoverLetter.tex}
% \end{titlepage}
% \clearpage

% \title{\onehalfspacing{Secure Federated Learning Against Byzantine Adversaries}}

% \title{\onehalfspacing{Towards Byzantine Resilient Secure Federated Learning}}

% \title{\onehalfspacing{A Study of Byzantine Resilient Secure Federated Learning}}

% \title{\onehalfspacing{BASEL: Towards Byzantine-Resilient Secure Federated Learning}}

\title{Byzantine-Resilient Secure Federated Learning}
% \title{\onehalfspacing{Byzantine-Resilient Secure Federated Learning}}
% Question: Should we name it Byzantine-Resilient Secure Gradient Descent or Stochastic Gradient Descent instead?

\author{Jinhyun So, Ba\c{s}ak~G{\"u}ler, A. Salman Avestimehr% <-this % stops a space
\thanks{Jinhyun So is with the Department of Electrical and Computer Engineering, University of Southern California, Los Angeles, CA, 90089 USA (e-mail: jinhyuns@usc.edu). Ba\c{s}ak G{\"u}ler is with the Department of Electrical and Computer Engineering, University of California, Riverside, CA, 92521 USA (email: bguler@ece.ucr.edu). 
A. Salman Avestimehr is with the Department of Electrical and Computer Engineering, University of Southern California, Los Angeles, CA, 90089 USA (e-mail: avestimehr@ee.usc.edu).

This work is published in Journal on Selected Areas in Communications \cite{so2020byzantine}.
}
}% <-this % stops a space

\maketitle

\begin{abstract}
Secure federated learning is a privacy-preserving framework to improve machine learning models by training over large volumes of data collected by mobile users. This is achieved through an iterative process where, at each iteration, users update a global model using their local datasets. Each user then masks its local update via random keys, and the masked models are aggregated at a central server to compute the global model for the next iteration. As the local updates are protected by random masks, the server cannot observe their true values. This presents a major challenge for the resilience of the model against adversarial (Byzantine) users, who can manipulate the global model by modifying their local updates or datasets. Towards addressing this challenge, this paper presents the first single-server Byzantine-resilient secure aggregation framework (BREA) for secure federated learning. BREA is based on an integrated stochastic quantization, verifiable outlier detection, and secure model aggregation approach to guarantee Byzantine-resilience, privacy, and convergence simultaneously. We provide theoretical convergence and privacy guarantees and characterize the fundamental trade-offs in terms of the network size, user dropouts, and privacy protection. Our experiments demonstrate convergence in the presence of Byzantine users, and comparable accuracy to conventional federated learning benchmarks.

\end{abstract}

\begin{IEEEkeywords}
Federated learning, privacy-preserving machine learning, Byzantine-resilience, distributed training in mobile networks.
\end{IEEEkeywords}

\singlespacing 

\section{Introduction}\label{Sec:Intro}
Federated learning is a distributed training framework that has received significant interest in the recent years, by allowing machine learning models to be trained over the vast amount of data collected by mobile devices \cite{pmlr-v54-mcmahan17a, bonawitz2017practical}. 
In this framework, training is coordinated by a central server who maintains a global model, which is updated by the mobile users through an iterative process. At each iteration, the server sends the current version of the global model to the mobile devices, who update it using their local data and create a local update. The server then aggregates the local updates of the users and updates the global model for the next iteration \cite{pmlr-v54-mcmahan17a, bonawitz2017practical, mcmahan2016communication, bonawitz2019towards,  kairouz2019advances,  so2020turbo, li2019federated, yang2018applied}. 

Security and privacy considerations of distributed learning are mainly focused around two seemingly separate directions: 1) ensuring robustness of the global model against adversarial manipulations and 2) protecting the privacy of individual users. The first direction aims at ensuring that the trained model is robust against Byzantine faults that may occur in the training data or during protocol execution. These faults may result either from an adversarial user who can manipulate the training data or the information exchanged during the protocol, or due to device malfunctioning. Notably, it has been shown that even a single Byzantine fault can significantly alter the trained model \cite{blanchard2017machine}. 
% Since the training data never leaves the user device throughout the protocol, the server cannot correct such errors. 
The primary approach for defending against Byzantine faults is by comparing the local updates received from different users and removing the outliers at the server \cite{blanchard2017machine, chen2017distributed, pmlr-v80-yin18a, alistarh2018byzantine, yang2019byzantine}. Doing so, however, requires the server to learn the true values of the local updates of each individual user. 
The second direction aims at protecting the privacy of the individual users, by keeping each local update private from the server and the other users participating in the protocol \cite{bonawitz2017practical, mcmahan2016communication, bonawitz2019towards,  kairouz2019advances, so2020turbo, li2019federated}. 
% This is due to the fact that, intermediate computations during training carry extensive information about the local datasets, which has been demonstrated in various recent works.
% Therefore, even if the server never observes the local datasets, it can extract sensitive information about the datasets by using only the local updates. 
This is achieved through what is known as a \emph{secure aggregation} protocol \cite{bonawitz2017practical}. In this protocol, each user masks its local update through additive secret sharing using private and pairwise random keys before sending it to the server. Once the masked models are aggregated at the server, the additional randomness cancels out and the server learns the aggregate of all user models. At the end of the protocol, the server learns no information about the individual models beyond the aggregated model, as they are masked by the random keys unknown to the server. 
In contrast, conventional distributed training frameworks that perform gradient aggregation and model updates using the true values of the gradients may reveal extensive information about the local datasets of the users, as shown in \cite{NIPS2019_9617, 8737416, geiping2020inverting}.

This presents a major challenge in developing a Byzantine-resilient, and at the same time, privacy-preserving federated learning framework. 
On the one hand, robustness against Byzantine faults requires the server to obtain the individual model updates in the clear, to be able to compare the updates from different users with each other and remove the outliers. 
On the other hand, protecting user privacy requires each individual model to be masked with random keys, as a result, the server only observes the masked model, which appears as a uniformly random vector that could correspond to any point in the parameter space.  
Our goal is to reconcile these two critical directions. In particular, we want to address the following question,
\emph{``How can one make federated learning protocols robust against Byzantine adversaries while preserving the privacy of individual users?''.}
% \emph{``How can a federated learning framework preserve the privacy of individual user models while simultaneously being robust against Byzantine manipulations?''.}
% \emph{``How can a federated learning framework preserve the privacy of individual user models while simultaneously being robust against Byzantine manipulations?''.} 

In this paper, we propose the first single-server Byzantine-resilient secure aggregation framework, {\byz}, towards addressing this problem. 
Our framework is built on the following main principles. 
Given a network of $N$ mobile users with up to $A$ adversaries, each user initially secret shares its local update with the other users, through a verifiable secret sharing protocol \cite{feldman1987practical}. However, doing so requires the local updates to be masked by uniformly random vectors in a finite field \cite{shamir1979share}, whereas the model updates during training are performed in the domain of real numbers. In order to handle this problem, {\byz} utilizes stochastic quantization to transfer the local updates from the real domain into a finite field. 

% In this paper, we propose the first single-server Byzantine-resilient secure aggregation framework, {\byz}, towards addressing this problem. 
% Our framework is built on the following principles. 
% Given a network of $N$ mobile users with up to $A$ adversaries, user $i\in[N]$ initially secret shares its local update $\mathbf{w}_i$ with the other users, through a verifiable secret sharing protocol \cite{feldman1987practical}. However, doing so requires the local updates to be masked by uniformly random vectors in a finite field \cite{shamir1979share}, whereas the model updates during training are performed in the domain of real numbers. In order to handle this problem, {\byz} utilizes stochastic quantization to transfer the local updates from the real domain into a finite field. 

Verifiable secret sharing allows the users to perform consistency checks to validate the secret shares and ensure that every user follows the protocol. 
% At this stage users can only verify consistency, i.e., the secret shares sent from each user correspond to a single model. 
However, a malicious user can still manipulate the global model by modifying its local update or private dataset. {\byz} handles such attacks through a robust gradient descent approach, enabled by secure computations over the secret shares of the local updates. To do so, each user locally computes the pairwise distances between the secret shares of the local updates belonging to other users, and sends the computation results to the server. 
% The resulting computations can then be viewed as secret shares of the true pairwise distances between the quantized models. 
Since these computations are carried out using the secret shares, users do not learn the true values of the local updates belonging to other users.  
% However, a malicious user can still manipulate the global model by modifying its local update or private dataset. {\byz} handles such attacks through a distance-based robust gradient descent approach \cite{blanchard2017machine}. Unlike \cite{blanchard2017machine}, the distance computations are performed locally by the users and by using secret shares of quantized models, instead of their true (unquantized) values. Since the original secret shares are verifiable, and the distance computations are polynomial operations over the secret shares, the computation results are also verifiable. 

In the final phase, the server collects the computation results from a sufficient number of users, recovers the pairwise distances between the local updates, and performs user selection for model aggregation. The user selection protocol is based on a distance-based outlier removal mechanism \cite{blanchard2017machine}, to remove the effect of potential adversaries and to ensure that the selected models are sufficiently close to an unbiased gradient estimator. 
% To do so, the server computes the sum of the pairwise distances between each user and its closest $N-F-2$ users, and selects $M$ users with the smallest total distance for some $M < N-2F-2$. By using tools from online learning and stochastic optimization, one can show that this update rule is robust to adversarial data manipulations and that the model converges to a value close to the optimum provided by the honest users. 
After the user selection phase, the secret shares of the models belonging to the selected users are aggregated locally by the mobile users. The server then gathers the secure computation results from the users, reconstructs the true value of the aggregate of the selected user models, and updates the global model. Our framework guarantees the privacy of individual user models, in particular, the server learns no information about the local updates, beyond their aggregated value and the pairwise distances. 
% Moreover, since the computations are executed in a verifiable secret sharing protocol, the users can verify the validity of the secret shares and correctness of the computations at any stage of the protocol. 
% Our framework guarantees convergence for the model and robustness to adversarial manipulations. 
% Our framework has provable convergence guarantees for the model and is robust to adversarial manipulations. 

In our theoretical analysis, we demonstrate provable convergence guarantees for the model and robustness guarantees against Byzantine adversaries. We then identify the theoretical performance limits in terms of the fundamental trade-offs between the network size, user dropouts, number of adversaries, and  privacy protection. Our results demonstrate that, in a network with $N$ mobile users, {\byz} can theoretically guarantee: i) robustness of the trained model against up to $A$ Byzantine adversaries, ii) tolerance against up to $D$ user dropouts, iii) privacy of each local update, against the server and up to $T$ colluding users, as long as $N\geq 2A+1+\max\{m+2,D+2T\}$, where $m$ is the number of selected models for aggregation.
% In our theoretical analysis, we demonstrate provable convergence guarantees for the model and robustness guarantees against Byzantine adversaries. We then identify the theoretical performance limits in terms of the fundamental trade-offs between the network size, user dropouts, number of adversaries, and  privacy protection. {\color{blue} Our results demonstrate that, in a network with $N$ mobile users, {\byz} can provide: i) robustness guarantee against $A$ Byzantine adversaries, ii) resilience guarantee against $D$ user dropouts, iii) privacy guarantee for each individual user against up to $T$ colluding users, as long as $N\geq D+2A+2T+1$.} 

% We then  demonstrate the performance of {\byz} through extensive experimental evaluations, and analyze the impact of the user dropout rate, communication bandwidth, and model size.  
% Moreover, since the computations are executed in a verifiable secret sharing protocol, as long as there are enough number of honest users, the users can verify the validity of the secret shares and correctness of the computations at any stage of the protocol. 
% Our framework has provable convergence guarantees for the model and is robust to adversarial manipulations. 

We then numerically evaluate the performance of {\byz} and compare it to the conventional federated learning protocol, the federated averaging scheme of~\cite{pmlr-v54-mcmahan17a}. To do so, we implement {\byz} in a distributed network of $N=40$ users with up to $A=12$ Byzantine users who can send arbitrary vectors to the server or to the honest users. 
We demonstrate that {\byz} guarantees convergence against Byzantine users and its convergence rate is comparable to the convergence rate of federated averaging. {\byz} also has comparable test accuracy to the federated averaging scheme while {\byz} entails quantization loss to preserve the privacy of individual users.

Finally, while {\byz} provides the first distributed training framework that is both Byzantine-resilient and privacy-preserving, there are several directions for further improvements. The first is that our theoretical analysis for convergence needs the assumption of an independent and identically distributed (i.i.d.) data distribution over the users. Second, we rely on distance-based outlier removal mechanisms, which make it difficult to distinguish whether a large distance between local updates is due to a non-i.i.d. data distribution or a Byzantine attack. Providing theoretical performance guarantees for non-i.i.d. data in the Byzantine-robust training setups is an important and interesting direction for future research. However, one should also note that even without the privacy requirements, this is still an open problem and an active area of research in the literature ~\cite{li2019rsa,data2020byzantine}. Other open questions include whether it is possible to break the quadratic communication and computation complexity of Byzantine-resilient secure aggregation in the current work, and investigating the fundamental performance limits of Byzantine-resilient distributed learning with multiple local updates.

% \vspace{0.1cm}
% \noindent
% {\bf Related Work.} 

\section{Related Work}
In the non-Byzantine federated learning setting, secure aggregation is performed through a procedure known as additive masking \cite{bonawitz2017practical}, \cite{acs2011have}. In this setup, users first agree on pairwise secret keys using a Diffie-Hellman type key exchange protocol \cite{diffie1976new}, and send a masked version of their local update to the server, where the masking is done using pairwise and private secret keys. 
When the masked models are aggregated at the server, additive masks cancel out, allowing the server to learn the aggregate of the local updates. 
% On the other hand, no information is revealed to the server about the local updates beyond their aggregated value, which protects the privacy of individual users. 
This process works well if no users drop during the execution of the protocol. In wireless environments, however, users may drop from the protocol anytime due to the variations in channel conditions. 
% This process works well if no users drop during the execution of the protocol. In wireless environments, however, users may drop from the protocol anytime due to the variations in channel conditions, or user preferences. 
Such user dropouts are handled by letting each user secret share their private and pairwise keys through Shamir's secret sharing \cite{shamir1979share}. The server can remove the additive masks by collecting the secret shares from the surviving users. This approach leads to a quadratic communication overhead in the number of users. 
More recent approaches have focused on reducing the communication overhead, by training in a smaller parameter space \cite{konevcny2016federated}, autotuning the parameters\cite{bonawitz2019federated}, or by utilizing coding techniques \cite{so2020turbo}. 

Another line of work has focused on differentially-private federated learning approaches \cite{geyer2017differentially, sun2019can}, to  protect the privacy of personally-identifiable information against inference attacks. Although our focus is not on differential-privacy, our approach may in principle be combined with differential privacy techniques \cite{dwork2008differential}, which is an interesting future direction. 
Another important direction in federated learning is the study of fairness  and how to avoid biasing the model towards specific users \cite{mohri2019agnostic, li2019fair}. The convergence properties of federated learning models are investigated in \cite{li2019convergence}.   

Distributed training protocols have been extensively studied in the Byzantine setting using clear (unmasked) model updates \cite{blanchard2017machine, chen2017distributed, pmlr-v80-yin18a, alistarh2018byzantine, yang2019byzantine,li2019rsa,data2020byzantine}. The main defense mechanism to protect the trained model against Byzantine users is by comparing the model updates received from different users, and removing the outliers. Doing so ensures that the selected model updates are close to each other, as long as the network has a sufficiently large number of honest users.  
A related line of work is model poisoning attacks, which are studied in \cite{bhagoji2018analyzing, fang2019local}. 

In concurrent work, a Byzantine-robust secure gradient descent algorithm has been proposed for a two-server model in \cite{he2020secure}, however, unlike federated learning (which is based on a single-server architecture)   \cite{pmlr-v54-mcmahan17a, bonawitz2017practical}, this work requires two honest (non-colluding) servers who both interact with the mobile users and communicate with each other to carry out a secure two-party protocol, but do not share any sensitive information with each other in an attempt to breach user privacy. In contrast, our goal is to develop a single-server Byzantine-resilient secure training framework, to facilitate robust and privacy-preserving training architectures for federated learning. Compared to the two-server models, single server models carry the additional challenge where all information has to be collected at a single server, while still being able to keep the individual models of the users private.

% In concurrent work, a Byzantine-robust secure gradient descent algorithm has been proposed for a two-server model in \cite{he2020secure}, however, unlike federated learning setups  \cite{bonawitz2017practical}, this work requires two honest (non-colluding) servers who both interact with the mobile users and can communicate with each other to carry out a secure two-party protocol, but do not share information with each other in an attempt to breach privacy. In contrast, our goal is to develop a single-server Byzantine-resilient secure training framework, as in \cite{bonawitz2017practical}. Compared to the two-server models, single server models carry the additional challenge where all communication has to go through a single server, while still keeping the individual models of the mobile users private from the server. 

The remainder of the paper is organized as follows. 
In Section~\ref{Sec:Background}, we provide background on federated learning. Section~\ref{Sec:System} presents our system model along with the key parameters used to evaluate the system performance. Section~\ref{Sec:framework} introduces our framework and the details of the specific components. Section~\ref{Sec:theoretical_result} presents our theoretical results, whereas our numerical evaluations are provided in Section~\ref{Sec:experiments}, to demonstrate the convergence and Byzantine-resilience. The paper is concluded in Section~\ref{Sec:conclusion}.  
The following notation is used throughout the paper. We represent a scalar variable with $x$, whereas $\mathbf{x}$ represents a vector. A set is denoted by $\mathcal{X}$, whereas $[N]$ refers to the set $\{1, \ldots, N\}$. 
% The term i.i.d. refers to independent identically distributed random variables. 

% \emph{Notation:} 
% In the remainder of the paper, $x$ represents a scalar variable, whereas $\mathbf{x}$ represents a vector. A set is denoted by $\mathcal{X}$, whereas $[N]$ refers to the set $\{1, \ldots, N\}$. The term i.i.d. refers to independent identically distributed random variables. 

% \section{Background on Secure Federated Learning}
\section{Background} \label{Sec:Background}

\begin{figure}[t]
\centering
\includegraphics[width=\linewidth]{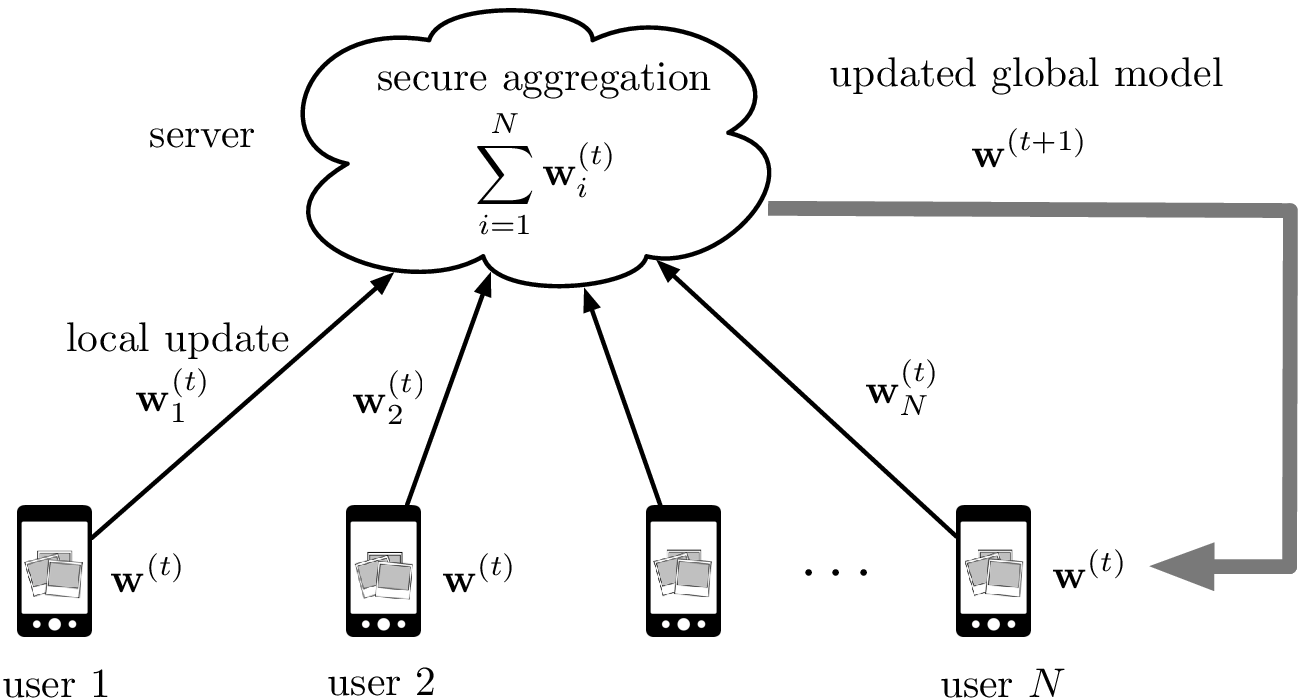}
\caption{Secure aggregation in federated learning. At iteration $t$, the server sends the current state of the global model, denoted by $\mathbf{w}^{(t)}$, to the mobile users. User $i\in[N]$ forms a local update $\mathbf{w}_i^{(t)}$ by updating the global model using its local dataset. The local updates are aggregated in a privacy-preserving protocol at the server, who then updates the global model, and sends the new model, $\mathbf{w}^{(t+1)}$, to the mobile users.}
\label{federated-original}
\vspace{-0.3cm}
\end{figure}

Federated learning is a distributed training framework for machine learning in mobile networks while preserving the privacy of users. 
Training is coordinated by a central server who maintains a global model $\mathbf{w}\in\mathbb{R}^d$ with dimension $d$. 
The goal is to train the global model using the data held at mobile devices, by minimizing a global objective function $C(\mathbf{w})$ as, 
\begin{equation}
\min_{\mathbf{w}} C(\mathbf{w}).  
\end{equation}
The global model is updated locally by mobile users on sensitive private datasets, by letting 
% \begin{equation}
% L(\mathbf{w}) = \frac{1}{N} \sum_{i=1}^N \mu_i L_i(\mathbf{w})  
% \end{equation}
% where $N$ is the total number of mobile users, $L_i$ is the local objective function of user $i$, and $\mu_i\geq 0$ is a weight parameter assigned to user $i$ such that $\sum_{i\in[N]} \mu_i = 1$.
% Training is performed through an iterative process where mobile users interact through a central server to update the global model. 
% At each iteration, the server shares the current state of the global model, denoted by $\mathbf{w}(t)$, with the mobile users. Each user locally updates the global model using its private dataset, after which user $i$ creates a local update $\mathbf{w}_i(t)$. The local updates are aggregated at the server in a privacy-preserving protocol, such that the server learns the aggregate of the models of all users, $\sum_{i\in [N]} \mu_i \mathbf{w}_i(t)$, 
\begin{equation}\label{eq:costfunction}
C(\mathbf{w}) = \sum_{i=1}^N \frac{B_i}{B} C_i(\mathbf{w})  
\end{equation}
where $N$ is the total number of mobile users, $C_i(\mathbf{w})$ denotes the local objective function of user $i$, $B_i$ is the number of data points in user $i$'s private dataset $\mathcal{D}_i$, and $B:=\sum_{i=1}^N B_i$. 
For simplicity, we assume that users have equal-sized datasets, i.e., $B_i = \frac{B}{N}$ for all $i\in[N]$. 
% hence $\frac{B_i}{B}=\frac{1}{N}$ for all $i\in[N]$. 
% For simplicity, we assume that $B_i$'s are the same such that $\frac{B_i}{B}=\frac{1}{N}$ for all $i\in[N]$.
% \begin{equation}\label{eq:costfunction}
% C(\mathbf{w}) = \frac{1}{N} \sum_{i=1}^N \mu_i C_i(\mathbf{w})  
% \end{equation}
% where $N$ is the total number of mobile users, $C_i(\mathbf{w})$ is the local objective function of user $i$, and $\mu_i\geq 0$ is a weight parameter assigned to user $i$ such that $\sum_{i\in[N]} \mu_i = 1$. 

Training is performed through an iterative process where mobile users interact through the central server to update the global model. 
At each iteration, the server shares the current state of the global model, denoted by $\mathbf{w}^{(t)}$, with the mobile users. 
% Each user locally updates the global model using its private dataset $\mathcal{D}_i$, after which 
Each user $i$ creates a local update, 
\begin{equation}\label{eq:local_model}
\mathbf{w}_i^{(t)} = g(\mathbf{w}^{(t)},\xi^{(t)}_i)
\end{equation}
where $g$ is an estimate of the gradient $\nabla C(\mathbf{w}^{(t)})$ of the cost function $C$ and $\xi^{(t)}_i$ is a random variable representing the random sample (or a mini-batch of samples) drawn from $\mathcal{D}_i$. We assume that the private datasets $\{\mathcal{D}_i\}_{i\in[N]}$ have the same distribution and $\{\xi^{(t)}_i\}_{i\in[N]}$ are i.i.d. $\xi^{(t)}_i\sim \xi$ where $\xi$ is a uniform random variable such that each  $\mathbf{w}_i^{(t)}$ is an unbiased estimator of the true gradient $\nabla C(\mathbf{w}^{(t)})$, i.e., 
\begin{equation}
\mathbb{E}_\xi [g(\mathbf{w}^{(t)},\xi^{(t)}_i)]=\nabla C(\mathbf{w}^{(t)}).
\end{equation}
% In this paper, we assume that the private datasets $\{\mathcal{D}_i\}_{i\in[N]}$ have the same distribution and $\{\xi^{(t)}_i\}_{i\in[N]}$ are independent such that each  $\mathbf{w}_i^{(t)}$ is an unbiased estimator of the true gradient $\nabla C(\mathbf{w}^{(t)})$, i.e., $\mathbb{E}_\xi [g(\mathbf{w}^{(t)},\xi^{(t)}_i)]=\nabla C(\mathbf{w}^{(t)})$.
The local updates are aggregated at the server in a privacy-preserving protocol, such that the server only learns the aggregate of a large fraction of the local updates, ideally the sum of all user models $\sum_{i\in [N]} \mathbf{w}_i^{(t)}$, 
but no further information is revealed about the individual models  beyond their aggregated value. 
Using the aggregate of the local updates, the server updates the global model for the next iteration,
\begin{equation}
    \mathbf{w}^{(t+1)} = \mathbf{w}^{(t)} - 
    \gamma^{(t)}\sum_{i\in [N]} \mathbf{w}_i^{(t)}
    % \frac{\gamma^{(t)}}{N}\sum_{i\in [N]} \mathbf{w}_i^{(t)}
    % \mathbf{w}^{(t+1)} = \mathbf{w}^{(t)} - 
    % {\color{blue} \gamma(t)\cdot F\big(\mathbf{w}^{(t)}_1,\ldots,\mathbf{w}^{(t)}_N\big)}
\end{equation}
where $\gamma^{(t)}$ is the learning rate,
% where $F$
% where $\mathbf{w}^{(t)}_i$ is the local update
and sends the updated model $\mathbf{w}^{(t+1)}$ to the users. 
This process is illustrated in  Figure~\ref{federated-original}.

Conventional secure aggregation protocols require each user to mask its local update using random keys  before aggregation \cite{bonawitz2017practical, so2020turbo, bellsecure}. 
This is typically done by creating pairwise keys between the users through a key exchange protocol \cite{diffie1976new}. Using the pairwise keys, each pair of users $i,j\in[N]$ agree on a pairwise random seed $a_{ij}^{(t)}$. User $i$ also creates a private random seed $b_i^{(t)}$, which protects the privacy of the local update in case the user is delayed instead of being dropped, in which case the pairwise keys are not sufficient for privacy, as shown in \cite{bonawitz2017practical}. 
User $i\in[N]$ then sends a masked version of its local update $\mathbf{w}_i^{(t)}$, given by
\begin{equation}\label{eq:google_modelmasking}
    \mathbf{y}_i^{(t)} := \mathbf{w}_i^{(t)} + \text{PRG}(b_{i}^{(t)})+\sum_{j:i<j}\text{PRG}(a_{ij}^{(t)}) -  \sum_{j:i>j}\text{PRG}(a_{ji}^{(t)})
\end{equation}
to the server, where $\text{PRG}$ is a pseudo random generator. 
User $i$ then secret shares $b_i^{(t)}$ and $\{a_{ij}^{(t)}\}_{j\in[N]}$ with the other users, via Shamir's secret sharing \cite{shamir1979share}. 
For computing the aggregate of the user models, the server collects either the secret shares of the  pairwise seeds belonging to a dropped user, or the shares of the private seed belonging to a surviving user (but not both). 
The server then recovers the private seeds of the surviving users and the pairwise seeds of the dropped users, and removes them from the aggregate of the masked models,
\begin{align} \label{eq:secure_aggregation}
\mathbf{y}^{(t)} &= \sum_{i\in\mathcal{U}}\big(\mathbf{y}_i^{(t)} - \text{PRG}(b_i^{(t)})\big)\notag \\
&\hspace{1cm}-\sum_{i\in\mathcal{D}}\Big(\sum_{j:i<j}\text{PRG}(a_{ij}^{(t)}) - \sum_{j:i>j}\text{PRG}(a_{ji}^{(t)})\Big) \notag \\
&= \sum_{i\in\mathcal{U}}\mathbf{w}_i^{(t)}
\end{align} 
and obtains the aggregate of the local updates, where $\mathcal{U}\subseteq [N]$ and $\mathcal{D}\subseteq [N]$ denote the set of surviving and dropped users, respectively. In~\eqref{eq:secure_aggregation}, $\sum_{i\in\mathcal{U}} \text{PRG}(b_i^{(t)})$ corresponds to the reconstructed private seeds belonging to the surviving users. On the other hand,  $\sum_{i\in\mathcal{D}}\Big(\sum_{j:i<j}\text{PRG}(a_{ij}^{(t)}) - \sum_{j:i>j}\text{PRG}(a_{ji}^{(t)})\Big)$ corresponds to the reconstructed pairwise seeds belonging to the dropped users. Both of these terms are reconstructed by the server to remove the random masks in the aggregate of the masked versions of the surviving users, $\sum_{i\in\mathcal{U}}\mathbf{y}_i^{(t)}$. At the end, all of the random masks cancel out, and the server recovers the summation of the original models belonging to all of the surviving users, i.e., $\sum_{i\in\mathcal{U}}\mathbf{w}_i^{(t)}$.

\section{Problem Formulation} \label{Sec:System}

In this section, we describe the  Byzantine-resilient secure aggregation problem, by extending the conventional secure aggregation scenario from Section~\ref{Sec:Background} to the case when some users, known as Byzantine adversaries, can manipulate the trained model by modifying their local datasets or by sharing false information during the protocol.

We consider a distributed network with $N$ mobile users and a single server. User $i\in[N]$ holds a local update\footnote{For notational clarity, throughout Sections~\ref{Sec:System} and \ref{Sec:framework}, we omit the iteration number $(t)$ from $\mathbf{w}_i^{(t)}$.} $\mathbf{w}_i$ of dimension $d$. 
The goal is to aggregate the local updates at the server, while protecting the privacy of individual users. 
% i.e., without revealing the individual models to the server or the other users. 
However, unlike the non-Byzantine setting of Section~\ref{Sec:Background}, the aggregation operation in the Byzantine setting should be robust against potentially malicious users. 
To this end, we represent the aggregation operation by a function,
\begin{equation}\label{eq:agg-byz}
f(\mathbf{w}_{1},\ldots,\mathbf{w}_{N}) = \sum_{i\in\mathcal{S}}  \mathbf{w}_i
% f(\{\mathbf{w}_{i}\}_{i\in[N]}) = \sum_{i\in\mathcal{S}}  \mathbf{w}_i
% f(\mathcal{S}, \{\mathbf{w}_{i}\}_{i\in\mathcal{S}}) = \sum_{i\in\mathcal{S}}  \mathbf{w}_i
\end{equation}
where $\mathcal{S}$ is a set of users selected by the server for aggregation. The role of $\mathcal{S}$ is to remove the effect of potentially Byzantine adversaries on the trained model, by removing the outliers. 
Similar to prior works on federated learning, our focus is on computationally-bounded parties, whose strategies can be described by a probabilistic polynomial time algorithm \cite{bonawitz2017practical}.
% We represent the set of surviving users by $\mathcal{S}\subseteq [N]$. 
% The goal is for the server to  aggregate the local updates,
% \begin{equation}\label{eq:agg}
% \mathbf{z} = \sum_{i\in [N]} \mathbf{w}_i
% \end{equation}
% while protecting user privacy, i.e.,  without revealing the individual models to the server or the other users. 

% Secure aggregation in the Byzantine setting requires addressing the following challenges:
We evaluate the performance of a Byzantine-resilient secure aggregation protocol according to the following key parameters:
\begin{itemize}

% \item \emph{Resilience against Byzantine users:} 
\item \emph{Robustness against Byzantine users:}
We assume that up to $A$ users are Byzantine (malicious), who manipulate the protocol by modifying their local datasets or by sharing false information during protocol execution. The protocol should be robust against such Byzantine adversaries. 
% We assume that up to $A$ users are Byzantine (malicious), who manipulate the trained model by modifying their local datasets or by sharing false information during protocol execution\footnote{Similar to prior works on federated learning, our focus is on computationally-bounded adversaries, whose strategies can be described by a probabilistic polynomial time algorithm \cite{bonawitz2017practical}.}. The trained model should be robust against such Byzantine manipulations. 
% We call a training protocol that is robust against up to $A$ Byzantine adversaries as \emph{$A$-resilient}. 
% The trained model should be robust against up to $A$ Byzantine faults, which correspond to the errors that may occur due to malicious users who manipulate the protocol by modifying their datasets or sending false information, or due to potential machine failures. 

\item \emph{Privacy of local updates:} 
The aggregation protocol should protect the privacy of any  individual user from the server and any collusions between up to $T$ users. Specifically, the local update of any user should not be revealed to the server or the remaining users, even if up to $T$ users cooperate with each other by sharing information.\footnote{Collusions that may occur between the server and the users are beyond the scope of our paper. 
% We provide a discussion on such scenarios in Section~\ref{sec:server-user}.
}  
% We call a training protocol that protects the privacy of each individual user against the server and any collusions between up to $T$ users as \emph{$T$-private}. 

% Byzantine faults refer to the errors in the protocol that may occur due to malicious users who want to manipulate the trained model by modifying their datasets or sending false information, as well as machine failures.
% The trained model should be robust against Byzantine faults.  

\item \emph{Tolerance to user dropouts:} Due to potentially poor wireless channel conditions, we assume that up to $D$  users may get dropped or delayed at any time during protocol execution. The protocol should be able to tolerate such dropouts, i.e., the privacy and convergence guarantees should hold even if up to $D$ users drop or get delayed.
% We call a training protocol can tolerate up to $D$ user dropouts as \emph{$D$-tolerant}.

% \item \emph{Full model aggregation:} Training is coordinated by a central server who maintains a global model. 
% The global model is updated locally by mobile users on sensitive private datasets. The local updates are aggregated through a secure aggregation protocol at the server, by ensuring that no information is revealed about the individual models beyond the aggregate of a large fraction of users.

\end{itemize}

% The goal is for the server to  aggregate the local updates, 
% \begin{equation}\label{eq:agg}
% \mathbf{z} = \sum_{i\in [N]} \mathbf{w}_i
% \end{equation}
% while protecting user privacy, i.e.,  without revealing the individual models to the server or the other users. 

% Under these conditions, we represent the
% is for the server to  aggregate the local updates, 
% \begin{equation}\label{eq:agg}
% % \mathbf{z} = \sum_{i\in [N]} \mathbf{w}_i
% {\color{blue} \text{New equation for aggregation with user selection}}
% \end{equation}
% while protecting user privacy, i.e.,  without revealing the individual models to the server or the other users. 

In this paper, we present a single-server Byzantine-resilient secure aggregation framework ({\byz}) for the computation of \eqref{eq:agg-byz}. 
% Our framework is $A$-resilient, $T$-private, and $D$-tolerant, as will be shown in Section~\ref{Sec:theoretical_result}. 
{\byz} consists of the following key components:
\begin{enumerate}

\item \emph{Stochastic quantization:} Users initially quantize their local updates from the real  domain to the domain of integers, and then embed them in a field $\mathbb{F}_{p}$ of integers modulo a prime ${p}$. To do so, our framework utilizes stochastic quantization, which is instrumental in our theoretical convergence guarantees.

\item \emph{Verifiable secret sharing of the user models:}  
Users then secret share their quantized models using a verifiable secret sharing protocol. 
This ensures that the secret shares created by the mobile users are valid, i.e., Byzantine users cannot cheat by sending invalid secret shares.

% This ensures that malicious users cannot cheat by creating wrong secret shares can be caught. 

\item \emph{Secure distance computation:} In this phase, users compute the pairwise distances between the secret shares of the local updates, and send the results to the server. Since this computation is performed using the secret shares of the models instead of their true values, users do not learn any information about the actual model parameters.   
% \item Model selection at the server. 
% \item \emph{Outlier detection at the server:}

\item \emph{User selection at the server:} 
Upon receiving the computation results from the users, the server recovers the pairwise distances between the local updates and selects the set of users whose models will be included in the aggregation, by removing the outliers. This ensures that the aggregated model is robust against potential manipulations from Byzantine users. The server then announces the list of the selected users.  

\item \emph{Secure model aggregation:} 
In the final phase, each user locally aggregates the secret shares of the models selected by the server, and sends the computation result to the server. Using the computation results, the server recovers the aggregate of the models of the selected users, and updates the model.

\end{enumerate}

In the following, we describe the details of each phase. 

% {\color{blue}
% % \noindent{\bf [ADD security model and attack model here?]}
% }

% \subsection{Security Model and Key Parameters}

% \subsection{Attack Model}

\section{The {\byz} Framework} \label{Sec:framework}
In this section, we present the details of the {\byz} framework for Byzantine-resilient secure federated learning. 
\vspace{-0.2cm}
\subsection{Stochastic Quantization}\label{sec:quantization} 
In {\byz}, the operations for verifiable secret sharing and secure distance computations are carried out over a finite field $\mathbb{F}_{p}$ for some large prime ${p}$. To this end, user $i\in[N]$ initially quantizes  its local update $\mathbf{w}_i$ from the domain of real numbers to the finite field. 
We assume that the field size ${p}$ is large enough to avoid any wrap-around during secure distance computation and secure model aggregation, which will be described in Sections~\ref{sec:secure_dist} and \ref{sec:secure_aggr}, respectively. 

% Quantization requires a challenging task as it  should be performed in a way to ensure the convergence of the model. 
% % Our solution is a stochastic quantization technique inspired by \cite{pmlr-v70-zhang17e}. 
% {\byz} addresses this challenge by leveraging stochastic quantization. To do so, for any integer $q\geq1$, we first define a stochastic rounding function: 
Quantization requires a challenging task as it should be performed in a way to ensure the convergence of the model. 
Moreover, the quantization function should allow the representation of negative integers in the finite field, and facilitate computations to be performed in the quantized domain. Therefore, we cannot utilize well-known gradient quantization techniques such as in \cite{alistarh2017qsgd}, which represents the sign of a negative number separately from its magnitude.  
% Our solution is a stochastic quantization technique inspired by \cite{pmlr-v70-zhang17e}. 
{\byz} addresses this challenge with a simple stochastic quantization strategy as follows. For any integer $q\geq1$, we define a stochastic rounding function:
\begin{equation}\label{eq:sto_round}
    Q_q(x) = 
    \left\{
    \begin{array}{ll}
          \frac{\lfloor qx \rfloor}{q}   & \text{with prob. } 1-(qx-\lfloor qx \rfloor)\\
          \frac{\lfloor qx \rfloor+1}{q} & \text{with prob. } qx-\lfloor qx \rfloor
    \end{array} 
    \right. 
    % \bar{\mathbf{w}}_i = Q({\mathbf{w}}_i)
\end{equation}
where $\floor{x}$ is the largest integer less than or equal to $x$, and note that this function is unbiased, i.e., $\mathbb{E}_Q[Q_q(x)]=x$. Parameter $q$ is a tuning parameter that corresponds to the number of quantization levels. Variance of $Q_q(x)$ decreases as the value of $q$ increases, which will be detailed in Lemma~\ref{lemma:q_prop} in Section~\ref{Sec:theoretical_result}. 
% where $\floor{x}$ is the largest integer less than or equal to $x$, and note that this function is unbiased, i.e., $\mathbb{E_Q}[Q_q(x)]=x$. $q$ is a tuning parameter, which corresponds to the number of quantization levels. Variance of $Q_q(x)$ decreases as the value of $q$ increases, which will be detailed in Lemma~\ref{lemma:q_prop}.
% \begin{equation}\label{eq:sto_round}
%     Q(x) = 
%     \left\{
%     \begin{array}{ll}
%           \lfloor x \rfloor & \text{with prob. } 1-(x-\lfloor x \rfloor)\\
%           \lfloor x \rfloor+1 & \text{with prob. } x-\lfloor x \rfloor
%     \end{array} 
%     \right. 
%     % \bar{\mathbf{w}}_i = Q({\mathbf{w}}_i)
% \end{equation}
% where $\floor{x}$ is the largest integer less than or equal to $x$, and note that this function is unbiased, i.e., $\mathbb{E}[Q(x)]=x$.
We then define the quantized model, 
\begin{equation}\label{eq:def_w_bar}
    \overline{\mathbf{w}}_i := \phi(q\cdot Q_q({\mathbf{w}}_i))
\end{equation}
where the function $Q_q$ from~\eqref{eq:sto_round} is carried out element-wise, and a mapping function $\phi:\mathbb{R}\rightarrow\mathbb{F}_p$ is defined to represent a negative integer in the finite field by using two's complement representation, 
\begin{equation}\label{eq:phi} 
    \phi(x) =
    \left\{
    \begin{array}{ll}
          x & \text{if } x \geq 0\\
          p+x & \text{if } x<0.
    \end{array} 
    \right. 
\end{equation}
% We note that the quantization parameter $q$ determines the number of quantization levels. In particular, larger value of $q$ makes the range of real domain corresponding to the least significant bit of $\overline{\mathbf{w}}_i$ smaller.

% We also define an inverse mapping function $\phi:\mathbb{F}_p\rightarrow\mathbb{R}$

% {\bf [ADD definition of finite field $\mathbb{F}_q$]}

\vspace{-0.4cm}
\subsection{Verifiable Secret Sharing of the User Models}\label{Sec:VSS} 
{\byz} protects the privacy of individual user models through verifiable secret sharing. 
This is to ensure that the individual user models are kept private while preventing the Byzantine users from breaking the integrity of the protocol by sending invalid secret shares to the other users.

To do so, user $i\in[N]$ secret shares its quantized model $\overline{\mathbf{w}}_i$ with the other users through a non-interactive verifiable secret sharing protocol~\cite{feldman1987practical}.
Our framework leverages Feldman's verifiable secret sharing protocol from~\cite{feldman1987practical}, which combines Shamir's secret sharing~\cite{shamir1979share} with homomorphic encryption. In this setup, each party creates the secret shares, then broadcasts commitments to the coefficients of the polynomial they use for Shamir's secret sharing, so that other parties can verify that the secret shares are constructed correctly. To verify the secret shares from the given commitments, the protocol leverages the homomorphic property of exponentiation, i.e., $exp(a+b) = exp(a) exp(b)$, whereas the privacy protection is based on the assumption that computation of the discrete logarithm in the finite field is intractable. 
% Our framework leverages Feldman's verifiable secret sharing protocol from~\cite{feldman1987practical}, which combines Shamir's secret sharing with homomorphic encryption. In this setup, each party creates the secret shares using Shamir's secret sharing~\cite{shamir1979share}, then  broadcasts commitments to the coefficients of the polynomial they use for Shamir's secret sharing, so that other parties can verify that the secret shares are constructed correctly. To verify the validity of the secret shares from the given commitments, the protocol leverages the homomorphic property of exponentiation, ($exp(c_1+c_2) = exp(c_1) exp(c_1)$), and is based on the assumption that computation of the discrete logarithm in the finite field is intractable. 

% 
% {\bf [Details of Feldman's secret sharing protocol]}
% 
% 
% The details of the verifiable secret sharing scheme are provided in Appendix~\ref{App:VSS}.  
% The secret share of $\mathbf{w}_i$ sent from user $i$ to user $j$ is denoted by $\mathbf{s}_{ij}$. 
% 
The individual steps carried out for verifiable secret sharing in our framework are as follows. 
Initially, the server and users agree on $N$ distinct elements $\{\theta_i\}_{i\in[N]}$ from $\mathbb{F}_{p}$. This can be done offline by using a conventional majority-based consensus protocol \cite{dolev1983authenticated, coulouris2005distributed}.
User $i\in[N]$ then generates secret shares of the quantized model $\overline{\mathbf{w}}_i$ by forming a random  polynomial $f_i\!:\!\mathbb{F}_{p}\!\rightarrow\!\mathbb{F}_{p}^d$ of degree ${T}$,
\begin{equation}\label{eq:VSS_gen}
    f_i(\theta) = \overline{\mathbf{w}}_i + \sum_{j=1}^{T} \mathbf{r}_{ij}\theta^{j}
\end{equation}
in which the vectors $\mathbf{r}_{ij}$ are generated uniformly at random from $\mathbb{F}^{d}_{p}$ by user $i$. 
User $i$ then sends a secret share of $\overline{\mathbf{w}}_i$ to user $j$, denoted by,
% The secret share of $\overline{\mathbf{w}}_i$ sent from user $i$ to user $j$ is denoted by
\begin{equation}\label{eq:VSS_def}
    \mathbf{s}_{ij}=f_i(\theta_j).
\end{equation}
% $\mathbf{s}_{ij}=f_i(\theta_j)$.
To make these shares verifiable, user $i$ also  broadcasts commitments to the coefficients of $f_i$, given by
\begin{equation}\label{eq:def:commitment}
    \mathbf{c}_{ij} = 
    \left\{
    \begin{array}{ll}
          \psi^{\overline{\mathbf{w}}_i} & \text{for } j=0\\
          \psi^{\mathbf{r}_{ij}} & \text{for } j=1,\ldots,{T}.
    \end{array} 
    \right.
\end{equation}
% where $\psi$ denotes a generator of $G_{\lambda}$ and $G_{\lambda}$ is a subgroup of $\mathbb{F}_{p}$ with order $\lambda$ such that ${\lambda}$ divides $p-1$, \cite{feldman1987practical}. 
where $\psi$ denotes a generator of $\mathbb{F}_p$, and all arithmetic is taken modulo $\lambda$ for some large prime $\lambda$ such that $p$ divides $\lambda-1$.

Upon receiving the commitments in \eqref{eq:def:commitment}, each user $j\in[N]$ can verify the secret share $\mathbf{s}_{ij}=f_i(\theta_j)$ by checking 
\begin{equation}\label{eq:VSStest}
    \psi^{\mathbf{s}_{ij}} = \prod_{k=0}^{T} \mathbf{c}_{ik} ^{\theta_j^k}
\end{equation}
where all arithmetic is taken modulo $\lambda$. This commitment scheme ensures that the secret shares are created correctly from the polynomial in~\eqref{eq:VSS_gen}, hence they are valid. 
On the other hand, as we assume the intractability of computing the discrete logarithm~\cite{feldman1987practical}, the server or the  users cannot compute the discrete logarithm $\log_{\psi}(\mathbf{c}_{it})$ and reveal the quantized model $\overline{\mathbf{w}}_i$ from $\mathbf{c}_{i0}$ in~\eqref{eq:def:commitment}. 

% where all arithmetic is taken modulo $p$. This commitment scheme ensures that the secret shares created from~\eqref{eq:VSS_gen} are valid. 

% We assume the intractability of computing the discrete logarithm~\cite{feldman1987practical}, i.e., the server or users cannot compute the discrete logarithm $\log_{\psi}(\mathbf{c}_{it})$~\cite{feldman1987practical} and reveal the quantized model $\overline{\mathbf{w}}_i$ from $\mathbf{c}_{i0}$ in~\eqref{eq:def:commitment}. 

% A  detailed analysis will be provided in Section~\ref{Sec:theoretical_result}. 
% On the other hand, as the adversaries are computationally-bounded, they cannot compute the discrete logarithm $\log_{\psi}(\mathbf{c}_{it})$~\cite{feldman1987practical} and reveal the quantized model $\overline{\mathbf{w}}_i$ from $\mathbf{c}_{i0}$.

% User $j$ complains if testing fails, and then user $i$ broadcast $\mathbf{s}_{ij}$. Now all users can check the equality from~\eqref{eq:VSStest}. If the share is invalid, then user $i$ is disqualified. 

% In order to protect the privacy of individual user models, we leverage verifiable secret sharing. This is to ensure that the individual user models are kept private while 

\vspace{-0.3cm}
\subsection{Secure Distance Computation}\label{sec:secure_dist}
Verifiable secret sharing of the model parameters, as described in Section~\ref{Sec:VSS}, ensures that the users follow the protocol correctly by creating valid secret shares. However, malicious users can still try to manipulate the trained model by modifying their local updates instead. In this case, the secret shares will be created correctly but according to a false model. 
In order to ensure that the trained model is robust against such adversarial manipulations, {\byz} leverages a distance-based outlier detection mechanism, such as in~\cite{blanchard2017byzantine,blanchard2017machine}. The main principle behind these mechanisms is to compute the pairwise distances between the local updates and select a set of models that are sufficiently close to each other. 
On the other hand, the outlier detection mechanism in {\byz} has to protect the privacy of local updates, and performing the distance computations on the true values of the model parameters would breach the privacy of individual users. 

We address this by a privacy-preserving distance computation approach, in which the pairwise distances are computed locally by each user, using the secret shares of the model parameters received from the other users.  
% In other words, the distance computations and the user selection process should not reveal any information  about the true values of the model parameters.  To address this, 
% The second phase of {\byz} protects the trained model against such adversarial data manipulations. 
% The second phase of {\byz} ensures that the trained model remains robust against such adversarial data manipulations. 
In particular, upon receiving the secret shares of the model parameters as described in Section~\ref{Sec:VSS}, user $i$ computes the pairwise distances,
\begin{equation}\label{eq:dist}
d^{(i)}_{jk} := \lVert \mathbf{s}_{ji} - \mathbf{s}_{ki} \rVert^2
\end{equation}
between each pair of users $j,k\in[N]$, and sends the result to the server. Since the computations in \eqref{eq:dist} are performed over the secret shares, user $i$ learns no information about the true values of the model parameters $\overline{\mathbf{w}}_{j}$ and $\overline{\mathbf{w}}_{k}$ of users $j$ and $k$, respectively. 
Finally, we note that the computation results from \eqref{eq:dist} are scalar values.  
% We also note that the computation results from \eqref{eq:dist} are scalar values, hence can be communicated efficiently to the server.  

\vspace{-0.2cm}
\subsection{User Selection at the Server}\label{sec:selection}
% Upon receiving the computation results in  \eqref{eq:dist} from a sufficient number of users, the server reconstructs the true values of the pairwise distances. Our decoding procedure is based on the decoding of Reed-Solomon codes. 
Upon receiving the computation results in  \eqref{eq:dist} from a sufficient number of users, the server reconstructs the true values of the pairwise distances. During this phase, Byzantine users may send incorrect computation results to the server, hence the reconstruction process should be able to correct the potential errors that may occur in the computation results due to malicious users. Our decoding procedure is based on the decoding of Reed-Solomon codes. 

The intuition of the decoding process is that the computations from~\eqref{eq:dist} correspond to evaluation points of a univariate polynomial $h_{jk}\!:\!\mathbb{F}_{p}\!\rightarrow\!\mathbb{F}_{p}$ of degree at most $2T$, where
\begin{equation}\label{eq:poly_dist}
    h_{jk}(\theta) := \lVert f_j(\theta) - f_k(\theta) \rVert^2
\end{equation}
for $\theta\in\{\theta_i\}_{i\in[N]}$ and $j,k\in[N]$.
Accordingly, $h_{jk}$ can be viewed as the encoding polynomial of a Reed-Solomon code with degree at most $2{T}$, such that the missing computations due to the dropped users correspond to the \emph{erasures} in the code, and manipulated computations from Byzantine users refer to the  \emph{errors} in the code. Therefore, the decoding process of the server corresponds to decoding an $[N,2{T}+1,N-2{T}]_p$ Reed-Solomon code with at most $D$ erasures and at most $A$ errors. By utilizing well-known Reed-Solomon decoding algorithms~\cite{gao2003new}, the server can recover the polynomial $h_{jk}$ and obtain the true value of the pairwise distances by using the relation  $h_{jk}(0)=\lVert f_j(0) - f_k(0) \rVert^2=\lVert \overline{\mathbf{w}}_j - \overline{\mathbf{w}}_k\rVert^2$.
% As $[n,k,d]_p$ Reed-Solomon tolerates the maximum number of errors up to $\lfloor \frac{n-k}{2}\rfloor$~\cite{gao2003new}, the server can recover the pairwise distance if $A\leq \lfloor \frac{(N-D)-(2A+1)}{2}\rfloor$, i.e. $N\geq D+4A+1$.
At the end, the server learns the pairwise distances 
\begin{equation}\label{eq:distance_finite_field}
\overline{d}_{jk} := \lVert \overline{\mathbf{w}}_j - \overline{\mathbf{w}}_k\rVert^2
\end{equation}
between the models of users $j,k\in[N]$. 
Then the server converts \eqref{eq:distance_finite_field} from the finite field to the real domain as follows, 
\begin{equation}\label{eq:distance_real_domain}
    {d}_{jk} = \frac{{\phi}^{-1} (\overline{d}_{jk})}{q^2}
\end{equation}
for $j,k\in[N]$, where $q$ is the integer parameter in~\eqref{eq:sto_round} and the demapping function $\phi^{-1}:\mathbb{F}_p\rightarrow\mathbb{R}$ is defined as 
% \begin{equation}\label{eq:distance_real_domain}
%     {d}_{jk} = {\phi}^{-1} (\overline{d}_{jk})
% \end{equation}
% for $j,k\in[N]$, where the demapping function $\phi^{-1}:\mathbb{F}_p\rightarrow\mathbb{R}$ is defined as 
\begin{equation}\label{eq:inv_phi}
{\phi}^{-1}(\overline{x})=
    \left\{
    \begin{array}{ll}
          \overline{x} & \text{if \quad } 0 \leq \overline{x} < \frac{p-1}{2}\\
          \overline{x}-p & \text{if \quad } \frac{p-1}{2} \leq \overline{x} < p
    \end{array} .
    \right.
\end{equation}
We assume the field size $p$ is large enough to ensure the correct recovery of the pairwise distances, 
% avoid wrap-around in~\eqref{eq:distance_real_domain} to ensure 
% the following equality holds
\begin{align}
    {d}_{jk}
    &=\frac{\phi^{-1}\big( \lVert \phi(q\cdot Q_q(\mathbf{w}_j)) - \phi(q\cdot Q_q(\mathbf{w}_k))\rVert^2 \big)}{q^2} \notag \\
    &=\frac{\phi^{-1}\big( \phi( q^2 \lVert Q_q(\mathbf{w}_j) - Q_q(\mathbf{w}_k) \rVert^2 ) \big)}{q^2} \label{eq:cond_fieldsize} \\
    &=\lVert Q_q(\mathbf{w}_j) - Q_q(\mathbf{w}_k) \rVert^2 \label{eq:distance_equality}    
\end{align}
% \begin{equation}\label{eq:distance_equality}
    % {d}_{jk}=\lVert Q_q(\mathbf{w}_j) - Q_q(\mathbf{w}_k) \rVert^2    
% \end{equation} 
where $Q_q$ is the stochastic rounding function defined in~\eqref{eq:sto_round} and \eqref{eq:cond_fieldsize} holds if
\begin{equation}\label{eq:cond2_fieldsize}
    q^2\lVert Q_q(\mathbf{w}_j) - Q_q(\mathbf{w}_k)\rVert^2<(p-1)/2.    
\end{equation}

By utilizing the pairwise distances in~\eqref{eq:distance_equality}, the server carries out a distance-based outlier removal algorithm to select the set of users to include in the final model aggregation. The outlier removal procedure of {\byz} follows the multi-Krum algorithm from~\cite{blanchard2017byzantine,blanchard2017machine}. 
% The main difference between our setup and the is that our framework applies the original multi-Krum algorithm to the quantized stochastic gradient descent setup, as {\byz} utilizes quantized gradients instead of the true gradients since our goal is to enable privacy-preserving computations on the secret shares. 
The main difference is that our framework considers the multi-Krum algorithm in a quantized stochastic gradient setting, as {\byz} utilizes quantized gradients instead of the true gradients, in order to enable privacy-preserving computations on the secret shares. 
% The main difference is that our framework applies the multi-Krum algorithm to quantized gradients, as {\byz} utilizes quantized gradients instead of the true gradients, in order to enable privacy-preserving computations on the secret shares. 
We present the theoretical convergence guarantees of this quantized multi-Krum algorithm in Section~\ref{Sec:theoretical_result}, and numerically demonstrate its convergence behaviour in our experiments in  Section~\ref{Sec:experiments}.

In this setup, the server selects $m$ users through the following iterative process.
At each iteration $k\in[m]$, the server selects one user, denoted by $i^{(k)}$, by finding 
\begin{equation}\label{eq:mKrum_iter}
i^{(k)} = \arg \min_{j\in[N]\setminus\mathcal{S}^{(k-1)}} s^{(k)}(j)
\end{equation}
where $\mathcal{S}^{(k)}$ denotes the index set of the users selected in up to $k$ iterations and $s^{(k)}(j)$ is a score function assigned to user $j$ at iteration $k$. 
The score function of user $j$ is defined as 
\begin{equation}
s^{(k)}(j) = \sum_{u\in\mathcal{I}^{(k-1)}_j} d_{ju}
\end{equation}
where $\mathcal{I}^{(k-1)}_j\subseteq[N]\setminus\mathcal{S}^{(k-1)}$ denotes the set of $(N-k+1)-A-2$ users whose models are closest to the model of user $j$. After selecting $i^{(k)}$, the server updates the selected index set as $\mathcal{S}^{(k)}=\{\mathcal{S}^{(k-1)}, i^{(k)} \}$ where $\mathcal{S}^{(0)}=\emptyset$. After $m$ iterations, the server obtains the index set $\mathcal{S}=\mathcal{S}^{(m)}$. 

\vspace{-0.2cm}
\subsection{Secure Model Aggregation}\label{sec:secure_aggr}
The final phase of {\byz} is to securely aggregate the local updates of the selected users, without revealing the individual models to the server. 
To do so, the server initially announces the list of selected users via broadcasting. We denote the set of selected users by $\mathcal{S}$. 
Then, each user locally aggregates the secret shares belonging to the selected users,
\begin{equation}\label{eq:SS_sum}
    \mathbf{s}_i = \sum_{j\in\mathcal{S}} \mathbf{s}_{ji}
\end{equation}
% and sends the result to the server. Upon receiving the computation results from a sufficient number of users, the server can decode the aggregate of the models $\sum_{j\in\mathcal{S}} \widetilde{\mathbf{w}}_j$ through polynomial interpolation. 
and sends the result to the server. Upon receiving the computation results from a sufficient number of users, the server can decode the aggregate of the models $\sum_{j\in\mathcal{S}} \overline{\mathbf{w}}_j$ through the decoding of Reed-Solomon codes.

The intuition of the decoding process is similar to the decoding of the pairwise distances described in Section~\ref{sec:selection}. Specifically, the computations from~\eqref{eq:SS_sum} can be viewed as evaluation points of a univariate polynomial $h:\!:\!\mathbb{F}_{p}\!\rightarrow\!\mathbb{F}_{p}^d$ of degree at most ${T}$,
\begin{equation}\label{eq:poly_aggr}
    h(\theta) := \sum_{j\in\mathcal{S}} f_j(\theta).
\end{equation}
One can then observe that $h$ is the encoding polynomial of a Reed-Solomon code with degree at most $T$, the missing computations due to the dropped users correspond to the erasures in the code, and manipulated computations from Byzantine users correspond to the errors in the code. Therefore, the decoding process of the server corresponds to decoding an $[N,{T}+1,N-{T}]_p$ Reed-Solomon code with at most $D$ erasures and at most $A$ errors. Hence, by using a Reed-Solomon decoding algorithm, the server can recover the polynomial $h$ and obtain the true value of the aggregate of the selected user models by using the relation  $h(0)=\sum_{j\in\mathcal{S}} f_j(0)=\sum_{j\in\mathcal{S}} \overline{\mathbf{w}}_j$. 
% We note that the total number of users selected for aggregation, $|\mathcal{S}| = m$, should not be too small, which can be agreed offline between the users and the server. Then, if the set announced by the server is too  small (e.g., consisting of a single user), the honest users may opt to not send the computation results. 
We note that the total number of users selected by the server for aggregation, i.e., $|\mathcal{S}| = m$, should be sufficiently large, which can be agreed offline between the users and the server. Then, if the set announced by the server is too  small (e.g., consisting of a single user), the honest users may opt to not send the computation results.

Upon learning the aggregate of the user models, the server updates the global model for the next iteration as follows,
\begin{equation}\label{eq:up1}
    \mathbf{w}^{(t+1)} = \mathbf{w}^{(t)} - \frac{\gamma^{(t)}}{q} \phi^{-1}\Big( \sum_{j\in\mathcal{S}} \overline{\mathbf{w}}_j^{(t)} \Big)
\end{equation}
where $\phi^{-1}$ is the demapping function defined in~\eqref{eq:inv_phi} and $q$ is the integer parameter in~\eqref{eq:sto_round}. 
We assume that the field size $p$ is large enough to avoid wrap-around in $\sum_{j\in\mathcal{S}} \overline{\mathbf{w}}_j^{(t)}$ such that 
\begin{align}
    \sum_{j\in\mathcal{S}} \overline{\mathbf{w}}_j^{(t)}
    &=\sum_{j\in\mathcal{S}}\phi(q \cdot Q_q(\mathbf{w}^{(t)}_j)) \label{eq:quantized_w} \\
    &=\phi\big(q\sum_{j\in\mathcal{S}}\mathbf{w}^{(t)}_j\big) \label{eq:cond_final_eq}
\end{align}
where~\eqref{eq:quantized_w} follows from~\eqref{eq:def_w_bar}.
% $\sum_{j\in\mathcal{S}} \overline{\mathbf{w}}_j^{(t)}=\sum_{j\in\mathcal{S}}\phi(q Q_q(\mathbf{w}^{(t)}_j))=\phi\big(q\sum_{j\in\mathcal{S}}\mathbf{w}^{(t)}_j\big)$
Finally, it follows from \eqref{eq:cond_final_eq} that the update equation in \eqref{eq:up1} is equivalent to 
\begin{equation}\label{eq:desired_updateeq}
    \mathbf{w}^{(t+1)} = \mathbf{w}^{(t)} - \gamma^{(t)} \sum_{j\in\mathcal{S}} Q_q( \mathbf{w}^{(t)}_j)
    % \mathbf{w}^{(t+1)} = \mathbf{w}^{(t)} - \gamma^{(t)} \sum_{j\in\mathcal{S}} \Big( \mathbf{w}^{(t)}_j + \mathbf{q}^{(t)}_j \Big)
\end{equation}
where $Q_q$ is the stochastic rounding function defined in~\eqref{eq:sto_round}. 
% where $\mathbf{v}^{(t)}_j$ is a quantization noise such that $\mathbb{E}[\mathbf{q}^{(t)}_j \mid \mathbf{w}_j^{(t)}]=0$.

Having all above steps, the overall {\byz} framework can now be presented in Algorithm \ref{Alg}.

\begin{algorithm}[t!]
% \footnotesize
% \small
  \caption{Byzantine-Resilient Secure Aggregation ({\byz})}\label{Alg} 
  \begin{algorithmic}[1]
    \INPUT{Local dataset $\mathcal{D}_i$ of users $i\in[N]$, number of iterations $J$.} \
    \OUTPUT{Global model $(\mathbf{w}^{(J)})$.} \ 
    \vspace{0.1cm} 
    \FOR{iteration $t=0,\ldots,J-1$}
        \FOR{user $i=1,\ldots,N$}
            \STATE Download the global model ${\mathbf{w}}^{(t)}$ from the server.
            
            \STATE Create a local update ${\mathbf{w}}_{i}^{(t)}$ from \eqref{eq:local_model}. 
            
            \STATE Create the quantized model $\overline{\mathbf{w}}_i = \phi(q\cdot Q_q({\mathbf{w}}_i))$ from \eqref{eq:def_w_bar}.
            
            \STATE Generate secret shares $\{\mathbf{s}_{ij}\}_{j\in[N]}$ from \eqref{eq:VSS_def} and send  $\mathbf{s}_{ij}$ to user $j\in[N]$.
            \STATE Generate commitments $\{\mathbf{c}_{ij}\}_{j\in[T]}$ from \eqref{eq:def:commitment} and broadcast $\{\mathbf{c}_{ij}\}_{j\in[T]}$ to all users.
            
            \STATE Verify the secret shares $\{\mathbf{s}_{ji}\}_{j\in[N]}$ by testing \eqref{eq:VSStest}.
            
            \STATE Compute $\{d^{(i)}_{jk}\}_{j,k\in[N]}$ from \eqref{eq:dist} and send the results to the server.
        
            % \vspace{0.1cm}\IF{$l=1$} 
            
            % \STATE Initialize  $\widetilde{\mathbf{s}}_i^{(1)}=\bar{\mathbf{s}}_i^{(1)} = \mathbf{0}$. 
            % \ELSE
            % \STATE Reconstruct the missing values in  $\{\widetilde{\mathbf{s}}_k^{(l-1)}\}_{k\in[N_{l-1}]}$ due to the dropped users in group $l-1$. 
            % \STATE Update the aggregate value $\widetilde{\mathbf{s}}_i^{(l)}$ from  \eqref{eq:sum_rx_data}. 
            % \STATE Compute the coded aggregate value $\bar{\mathbf{s}}_i^{(l)}$ from  \eqref{eq:sum_codedmodel}.     
            % \ENDIF
            % \STATE Send $\{\widetilde{\mathbf{x}}_{i,j}^{(l)},\! \bar{\mathbf{x}}_{i,j}^{(l)}, \! \widetilde{\mathbf{s}}_i^{(l)}\!\!, \bar{\mathbf{s}}_i^{(l)}\}$ to user $j\in[N_{l\!+\!1}]$ in group $l\!+\!1$ ($j\in[N_{final}]$ if $l=L$).   
        \ENDFOR
    
        \STATE Server recovers   $\{\overline{d}_{jk}\}_{j,k\in[N]}$ in \eqref{eq:distance_finite_field} from the computation results  $\{d^{(i)}_{jk}\}_{i,j,k\in[N]}$ by utilizing   Reed-Solomon decoding.
        % \STATE Server decodes polynomials $\{h_{jk}\}_{j,k\in[N]}$ in \eqref{eq:poly_dist} by utilizing Reed-Solomon decoder and recovers $\{\overline{d}_{jk}\}_{j,k\in[N]}$ in \eqref{eq:distance_finite_field}.
        
        \STATE Server converts $\{\overline{d}_{jk}\}_{j,k\in[N]}$ from the finite field to the real domain to obtain the pairwise distances $\{{d}_{jk}\}_{j,k\in[N]}$ from \eqref{eq:distance_real_domain}.
        
        \STATE Server selects the set $\mathcal{S}$ by utilizing the multi-Krum algorithm \cite{blanchard2017machine} based on the pairwise distances $\{{d}_{jk}\}_{j,k\in[N]}$.
        
        \STATE Server broadcasts the set $\mathcal{S}$ to all users.
    
        \FOR{user $i=1,\ldots,N$}
            \STATE Compute $\mathbf{s}_i = \sum_{j\in\mathcal{S}} \mathbf{s}_{ji}$ from \eqref{eq:SS_sum} and send the result to the server.
        \ENDFOR
        
        \STATE Server recovers $\sum_{i\in\mathcal{S}} \mathbf{w}^{(t)}_{i}$ from the computation results $\{\mathbf{s}_i\}_{i\in[N]}$ by utilizing  Reed-Solomon decoding.
        
        \STATE Server updates the global model, $\mathbf{w}^{(t+1)} = \mathbf{w}^{(t)} - \frac{\gamma^{(t)}}{q} \phi^{-1}\Big( \sum_{i\in\mathcal{S}} \overline{\mathbf{w}}_i^{(t)} \Big)$.
    \ENDFOR
%   \ENDFOR
    
    % \vspace{0.1cm}
    % \FOR{user $i=1,\ldots,N_{final}$}
    % \STATE Reconstruct the missing values in  $\{\widetilde{\mathbf{s}}_k^{(L)}\}_{k\in[N_L]}$ due to the dropped users in group $L$. 
    % \STATE Compute  $\widetilde{\mathbf{s}}_i^{(final)}$ from \eqref{eq:finalaggr} and $\bar{\mathbf{s}}_i^{(final)}$ from \eqref{eq:finalaggr_LCC}.
    
    % \STATE Send $\{\widetilde{\mathbf{s}}_i^{(final)}, \bar{\mathbf{s}}_i^{(final)}\}$ to the server. 
    % \ENDFOR

    % \EndProcedure
  \end{algorithmic}
\end{algorithm}

% 1) We need to prove that verifiable secret sharing is homomorphic: if we share two variables using verifiable secret sharing and then perform a polynomial computation on those secret shares, is the result also verifiable? I think yes, but we need a reference. 

% 2) Also, in our privacy proof, we need to be careful that we are essentially using the same verifiable secret shares twice (once for computing the distances and second for computing the model aggregates) we need to analyze how much information is leaked during the second use/computation. 

% 3) fundamental performance trade-offs: Lets assume there are N users in total. Throughout the protocol, at most D users out of N users drop out. They may dropout at any time during the execution of the protocol (verifiable secret sharing, model aggregation etc.). Moreover, out of N users, at most A users are adversarial (active adversaries). Finally, we are trying to evaluate a degree 2 polynomial (deg(f)=2), which is during the computation of the pairwise distances. Then, the first important question is, what is the tradeoff between N, F, deg(f), and A?  

% 4) finally, we need to prove convergence. 

\section{Theoretical Analysis} \label{Sec:theoretical_result}
% We now consider the cost function in~\eqref{eq:costfunction} and analyze the convergence of stochastic gradient descent using the {\byz} framework from Section \ref{Sec:framework}. 
In this section, we analyze the fundamental performance limits of {\byz}. 
The global model update equation of {\byz} can be expressed as follows,
% The (global) model update equation can be expressed as follows,
\begin{equation}\label{eq:update_eq}
    \mathbf{w}^{(t+1)} = \mathbf{w}^{(t)} - \gamma^{(t)} f\big(Q_q(\mathbf{w}^{(t)}_1),\ldots,Q_q(\mathbf{w}^{(t)}_N)\big)
\end{equation} 
where $f$ is the aggregation operation  from~\eqref{eq:agg-byz} and represents the user selection and model aggregation procedures from Sections~\ref{sec:selection} and \ref{sec:secure_aggr}, respectively, while $Q_q$ is the stochastic rounding function defined in~\eqref{eq:sto_round}. 

% where $f$ denotes the aggregation operation from Section~\ref{sec:secure_aggr}, 
% and $Q$ is the stochastic rounding function defined in~\eqref{eq:sto_round}. 

% where $f$ is the aggregation operation  in~\eqref{eq:agg-byz}, and $Q$ is the stochastic rounding function defined in~\eqref{eq:sto_round}. 

As described in Section~\ref{Sec:Background}, the local update $\mathbf{w}^{(t)}_i$ created by an honest user is an unbiased estimator of the true  gradient, i.e., $\mathbf{w}^{(t)}_i = g(\mathbf{w}^{(t)},\xi^{(t)}_i)$ with $\mathbb{E}_\xi [g(\mathbf{w}^{(t)},\xi^{(t)}_i)]=\nabla C(\mathbf{w}^{(t)})$ where $\xi_i^{(t)}\sim\xi$ and $\xi$ is a uniform random variable representing the random sample (or a mini-batch of samples) drawn from the dataset. 
We define the local standard deviation $\sigma$ of the gradient estimator $g$ by
\begin{equation}
d\sigma^2(\mathbf{w}) := \mathbb{E}_\xi \lVert g(\mathbf{w},\xi) - \nabla C(\mathbf{w}) \rVert^2
\end{equation}
for all $\mathbf{w}\in \mathbb{R}^d$.
% We define the local standard deviation $\sigma$ of the gradient estimator $g$ by
% $d\sigma^2(\mathbf{w}) = \mathbb{E}_\xi \lVert g(\mathbf{w},\xi) - \nabla C(\mathbf{w}) \rVert^2$ for all $\mathbf{w}\in \mathbb{R}^d$.
%~\cite{blanchard2017byzantine}. 
The model created by a Byzantine user  can refer to any random vector $\mathbf{b}_i^{(t)}\in\mathbb{R}^d$, which we represent as $\mathbf{w}^{(t)}_i =\mathbf{b}_i^{(t)}$. Accordingly, the quantized model $Q(\mathbf{w}^{(t)}_i)$ of Byzantine user could refer to any vector in $\mathbb{Z}^d$. 
% The model created and quantized by a Byzantine user  can refer to any random vector $\mathbf{b}_i^{(t)}\in\mathbb{Z}^d$, which we represent as $Q(\mathbf{w}^{(t)}_i) =\mathbf{b}_i^{(t)}$.
% The model $\mathbf{w}^{(t)}_i$ created by a Byzantine user can be any random vector $\mathbf{b}_i^{(t)}\in\mathbb{R}^d$, which we denote as $\mathbf{w}^{(t)}_i =\mathbf{b}_i^{(t)}$.

% $\mathbf{q}^{(t)}_i$ represents the quantization noise introduced by the stochastic rounding function $Q$ defined in~\eqref{eq:sto_round}.

Our first lemma states the unbiasedness and bounded variance of the quantized gradient estimator $Q_q(g(\mathbf{w},\xi))$ for any vector $\mathbf{w}\in\mathbb{R}^d$.
% Our first lemma states the unbiasedness and bounded variance of the quantized gradient estimator $\overline{\mathbf{w}^{(t)}_i}=Q(g(\mathbf{w}^{(t)},\xi_i))$.
\begin{lemma}\label{lemma:q_prop}
    For the quantized gradient estimator $Q_q(g(\mathbf{w},\xi))$ with a given vector $\mathbf{w}\in\mathbb{R}^d$
    where $\xi$ is a uniform random variable representing the sample drawn, $g$ is a gradient estimator such that $\mathbb{E}_\xi [g(\mathbf{w},\xi)]=\nabla C(\mathbf{w})$ and $\mathbb{E}_\xi \lVert g(\mathbf{w},\xi) - \nabla C(\mathbf{w}) \rVert^2 = d\sigma^2(\mathbf{w})$, and the stochastic rounding function $Q_q$ is given in~\eqref{eq:sto_round}, the following holds,
    \begin{align}
        \mathbb{E}_{Q,\xi} [Q_q(g(\mathbf{w},\xi))] &= \nabla C(\mathbf{w}) \label{unbiased}\\
        \mathbb{E}_{Q,\xi} \lVert Q_q(g(\mathbf{w},\xi)) - \nabla C(\mathbf{w}) \rVert^2
        &\leq d\sigma ^{\prime \;2}(\mathbf{w}) \label{variance}
        % \mathbb{E}_{Q,\xi} [\overline{\mathbf{w}}] = \nabla C(\mathbf{w}) \\
        % \mathbb{E}_{Q,\xi} \lVert \overline{\mathbf{w}} - \nabla C(\mathbf{w}) \rVert^2
        % \leq d\sigma ^{\prime \;2}
    \end{align}
    where $\sigma ^{\prime}(\mathbf{w})=\sqrt{\frac{1}{4q^2} + \sigma^2(\mathbf{w})}$.
    % where $\sigma ^{\prime}(\mathbf{w})=\sqrt{\frac{1}{4} + \sigma^2(\mathbf{w})}$.
\end{lemma}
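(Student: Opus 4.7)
The plan is to prove the two claims separately, exploiting (i) the unbiasedness of the scalar stochastic rounding $Q_q$ defined in~\eqref{eq:sto_round}, (ii) the unbiasedness and variance assumption on the gradient estimator $g$, and (iii) the tower property of expectation with respect to the independent randomness of $Q$ and $\xi$.

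For~\eqref{unbiased}, I would condition on $\xi$ first. A direct scalar computation using~\eqref{eq:sto_round} gives $\mathbb{E}_Q[Q_q(x)] = (1-(qx-\lfloor qx\rfloor))\,\tfrac{\lfloor qx\rfloor}{q} + (qx-\lfloor qx\rfloor)\,\tfrac{\lfloor qx\rfloor+1}{q} = x$, so componentwise $\mathbb{E}_Q[Q_q(\mathbf{v})\mid\mathbf{v}]=\mathbf{v}$ for any deterministic $\mathbf{v}\in\mathbb{R}^d$. Applying this with $\mathbf{v} = g(\mathbf{w},\xi)$ and then taking the outer expectation over $\xi$ yields
\begin{equation*}
\mathbb{E}_{Q,\xi}[Q_q(g(\mathbf{w},\xi))] = \mathbb{E}_\xi\!\left[\mathbb{E}_Q[Q_q(g(\mathbf{w},\xi))\mid\xi]\right] = \mathbb{E}_\xi[g(\mathbf{w},\xi)] = \nabla C(\mathbf{w}).
\end{equation*}

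For~\eqref{variance}, the key step is the add-and-subtract decomposition
\begin{equation*}
Q_q(g(\mathbf{w},\xi)) - \nabla C(\mathbf{w}) = \big(Q_q(g(\mathbf{w},\xi)) - g(\mathbf{w},\xi)\big) + \big(g(\mathbf{w},\xi) - \nabla C(\mathbf{w})\big),
\end{equation*}
expanding the squared norm, and taking $\mathbb{E}_Q[\,\cdot\mid\xi]$ first. The cross term vanishes because the first summand has conditional mean zero (by the unbiasedness of $Q_q$ established above), while the second summand is $\xi$-measurable. Taking the subsequent expectation over $\xi$ leaves
\begin{equation*}
\mathbb{E}_{Q,\xi}\lVert Q_q(g(\mathbf{w},\xi)) - \nabla C(\mathbf{w})\rVert^2 = \mathbb{E}_{Q,\xi}\lVert Q_q(g(\mathbf{w},\xi)) - g(\mathbf{w},\xi)\rVert^2 + \mathbb{E}_\xi\lVert g(\mathbf{w},\xi)-\nabla C(\mathbf{w})\rVert^2,
\end{equation*}
where the second term equals $d\sigma^2(\mathbf{w})$ by hypothesis.

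The main technical step is therefore to bound the quantization-error term. For a scalar $x$, writing $\alpha := qx - \lfloor qx\rfloor \in[0,1)$, equation~\eqref{eq:sto_round} gives $\mathbb{E}_Q[(Q_q(x)-x)^2] = (1-\alpha)(\alpha/q)^2 + \alpha((1-\alpha)/q)^2 = \alpha(1-\alpha)/q^2 \leq \tfrac{1}{4q^2}$, using $\alpha(1-\alpha)\leq 1/4$ on $[0,1)$. Since $Q_q$ is applied coordinate-wise in~\eqref{eq:def_w_bar} with independent rounding per coordinate, summing over the $d$ coordinates yields $\mathbb{E}_Q\lVert Q_q(\mathbf{v}) - \mathbf{v}\rVert^2 \leq d/(4q^2)$ for any $\mathbf{v}\in\mathbb{R}^d$, and in particular for $\mathbf{v}=g(\mathbf{w},\xi)$. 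Combining, the total variance is bounded by $d/(4q^2) + d\sigma^2(\mathbf{w}) = d\sigma^{\prime\,2}(\mathbf{w})$, which is precisely~\eqref{variance}. The only subtle point I anticipate is being explicit that the randomness of $Q$ is independent of $\xi$ so that conditioning is valid, and that the $d$ coordinate-wise roundings are independent so that the per-coordinate variance bound adds up.
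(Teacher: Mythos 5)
Your proof is correct and follows essentially the same route as the paper: condition on $\xi$, use the scalar unbiasedness of $Q_q$ for \eqref{unbiased}, bound the per-coordinate quantization variance by $1/(4q^2)$, and add the estimator variance $d\sigma^2(\mathbf{w})$ to get \eqref{variance}. Your explicit observation that the cross term vanishes because $\mathbb{E}_Q[Q_q(g(\mathbf{w},\xi))-g(\mathbf{w},\xi)\mid\xi]=0$ is in fact a cleaner justification than the paper's appeal to the ``triangle inequality'' for the same decomposition, and your remark about independence across coordinates is unnecessary (linearity of expectation suffices) but harmless.
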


\begin{proof}
    (Unbiasedness) Given $Q_q$ in~\eqref{eq:sto_round} and any random variable $x$, it follows that,
    \begin{align}
        \mathbb{E}_Q [Q_q(x) \mid x] 
        =&\; \frac{\lfloor qx \rfloor}{q} (1-(qx-\lfloor qx \rfloor)) \notag\\
        &+ \frac{(\lfloor qx \rfloor+1)}{q} (qx-\lfloor qx \rfloor) \notag \\
        =&\; x 
    \end{align}
    from which we obtain the unbiasedness condition in \eqref{unbiased},
    \begin{align}
        \mathbb{E}_{Q,\xi} [Q_q(g(\mathbf{w},\xi))] 
        &= \mathbb{E}_{\xi}\big[ \mathbb{E}_{Q}[Q_q(g(\mathbf{w},\xi))\mid g(\mathbf{w},\xi)] \big] \notag \\
        &= \mathbb{E}_{\xi}\big[ g(\mathbf{w},\xi) \big] \notag \\
        &= \nabla C(\mathbf{w}). 
    \end{align}
    % leading to the unbiasedness condition in \eqref{unbiased}. 

\noindent    
    (Bounded variance)
    Next, we observe that,
    \begin{align}
        &\mathbb{E}_Q \Big[ \big(Q_q(x) - \mathbb{E}_Q[Q_q(x)\mid x]\big)^2 \mid x \Big] \notag \\
        &\quad = \Big(\frac{\lfloor qx \rfloor}{q} - x\Big)^2 (1-(qx-\lfloor qx \rfloor)) \notag\\
        &\quad \;\;+ \Big(\frac{\lfloor qx \rfloor+1}{q}-x \Big)^2(qx-\lfloor qx \rfloor) \notag \\
        % &= (x - \lfloor x \rfloor)(1- (x - \lfloor x \rfloor)) \notag \\
        &\quad = \frac{1}{q^2}\Big(\frac{1}{4} - \big(qx-\lfloor qx \rfloor -\frac{1}{2}\big)^2 \Big) \notag \\
        % &\quad \leq \frac{q^2}{4(2q-1)} - (2q-1)\Big( \frac{\lfloor qx \rfloor - qx}{q} + \frac{q}{2(2q-1)} \Big)^2 \notag \\
        &\quad \leq \frac{1}{4q^2} \label{eq:lem1_ineq1}
    \end{align}
    % \begin{align}
    %     &\mathbb{E}_Q \Big[ \big(Q_q(x) - \mathbb{E}_Q[Q_q(x)]\big)^2 \mid x \Big] \notag \\
    %     &\quad = \Big(\frac{\lfloor qx \rfloor}{q} - x\Big)^2 (1-(qx-\lfloor qx \rfloor)) \notag\\
    %     &\quad \;\;+ \Big(\frac{\lfloor qx \rfloor}{q}-x+1 \Big)^2(qx-\lfloor qx \rfloor) \notag \\
    %     % &= (x - \lfloor x \rfloor)(1- (x - \lfloor x \rfloor)) \notag \\
    %     &\quad = \frac{q^2}{4(2q-1)} - (2q-1)\Big( \frac{\lfloor qx \rfloor - qx}{q} + \frac{q}{2(2q-1)} \Big)^2 \notag \\
    %     &\quad \leq \frac{q^2}{4(2q-1)} \label{eq:lem1_ineq1}
    % \end{align}
    % \begin{align}
    %     &\mathbb{E}_Q \Big[ \big(Q(x) - \mathbb{E}_Q[Q(x)]\big)^2 \mid x \Big] \notag \\
    %     &= (\lfloor x \rfloor - x)^2 (1-(x-\lfloor x \rfloor)) + (\lfloor x \rfloor-x+1)(x-\lfloor x \rfloor) \notag \\
    %     % &= (x - \lfloor x \rfloor)(1- (x - \lfloor x \rfloor)) \notag \\
    %     &= \frac{1}{4} - \Big( x - \lfloor x \rfloor - \frac{1}{2} \Big)^2 \notag \\
    %     &\leq \frac{1}{4} \label{eq:lem1_ineq1}
    % \end{align}
from which one can obtain the bounded variance condition in \eqref{variance} as follows,
    \begin{align}
        &\mathbb{E}_{Q,\xi} \lVert Q_q(g(\mathbf{w},\xi)) - \nabla C(\mathbf{w}) \rVert^2 \notag \\
        &\quad = \mathbb{E}_{\xi}\big[ \mathbb{E}_{Q}[ \lVert Q_q(g(\mathbf{w},\xi)) - \nabla C(\mathbf{w}) \rVert^2 \mid g(\mathbf{w},\xi)] \big] \notag \\
        &\quad \leq \mathbb{E}_{\xi}\big[ \mathbb{E}_{Q}[ \lVert Q_q(g(\mathbf{w},\xi)) - g(\mathbf{w},\xi) \rVert^2 \mid g(\mathbf{w},\xi)] \big] \notag \\
        &\quad \qquad + \mathbb{E}_{\xi}\big[ \mathbb{E}_{Q}[ \lVert g(\mathbf{w},\xi) - \nabla C(\mathbf{w}) \rVert^2 \mid g(\mathbf{w},\xi)] \big] \label{eq:triangle_ineq} \\
        &\quad \leq \frac{d}{4q^2} + d \sigma^2(\mathbf{w})  \label{eq:lem1_ineq2} \\
        &\quad = d \sigma ^{\prime\;2}(\mathbf{w}) \notag
        % &\quad \leq \frac{d}{4} + d \sigma^2(\mathbf{w})  \label{eq:lem1_ineq2} \\
        % &\quad = d \sigma ^{\prime\;2}(\mathbf{w}) \notag
        % &= \mathbb{E}_{\xi}\big[ g(\mathbf{w},\xi) \big] \notag \\
        % &= \nabla C(\mathbf{w}).
    \end{align}
    where \eqref{eq:triangle_ineq} follows from the triangle inequality and \eqref{eq:lem1_ineq2} follows form~\eqref{eq:lem1_ineq1}.
\end{proof}
As discussed in Section~\ref{Sec:System}, Byzantine users can manipulate the training protocol via two means, either by modifying their local update (directly or by modifying the local dataset), or by sharing false information during protocol execution. In this section, we demonstrate how {\byz} provides robustness in both cases. 
We first focus on the former case and study the resilience of the global model, i.e., conditions under which the trained model remains close to the true model, even if some users modify their local updates adversarially. 
The second case, i.e., robustness of the protocol when some users exchange false information during the protocol execution, will be considered in Theorem~\ref{thm:convergence}. 
% We first focus on the former case and  analyze the conditions under which the trained model is close to the true model.
% We first focus on the former case and demonstrate the resilience of the trained model even if adversaries modify their local updates, i.e., the model trained via {\byz} is close to the true model. 
% We first focus on the former case and demonstrate that the trained model is close to the true gradient Byzantine users .  

In order to evaluate the resilience of the global model against Byzantine adversaries, we adopt the notion of $(\alpha, A)$-Byzantine resilience from \cite{blanchard2017byzantine}. 
% Without loss of generality, in the sequel, we assume that the first $A$ users are Byzantine. 
% We first review the definition of $(\alpha, A)$ Byzantine-resilience from \cite{blanchard2017byzantine}. 
% Without loss of generality, in the following, we assume that the first $A$ users are Byzantine.
% \begin{definition}[$(\alpha, A)$-Byzantine resilience]
\begin{definition}[$(\alpha, A)$-Byzantine resilience, \cite{blanchard2017byzantine}]\label{def:byzantine_resilience}
    Let $0\leq\alpha<\pi/2$ be any angular value and $0\leq A\leq N$ be any integer.
    Let $\mathbf{w}_1,\ldots,\mathbf{w}_{N}\in\mathbb{R}^d$ be any  i.i.d random vectors such that $\mathbf{w}_i\sim\mathbf{w}$ with $\mathbb{E}[\mathbf{w}] = \mathbf{W}$.
    Let $\mathbf{b}_1,\ldots,\mathbf{b}_{A}\in\mathbb{R}^d$ be any random vectors.
    % We can say Function $f$ is $(\alpha, A)$-Byzantine resilient if, the vector $\mathbf{f}=f\big(\mathbf{b}_1,\ldots,\mathbf{b}_A,\mathbf{w}_1,\ldots,\mathbf{w}_{N-A}\big)$ satisfies
    Then, function $f$ is $(\alpha, A)$-Byzantine resilient if, for any $1\leq j_1<\cdots<j_A\leq N$, 
    \begin{equation}
    \mathbf{f}:=f\big(\mathbf{w}_1,\ldots,\underbrace{\mathbf{b}_1}_{j_1},\ldots,\underbrace{\mathbf{b}_A}_{j_A},\ldots,\mathbf{w}_{N}\big)
    \end{equation}
    satisfies, i) $\mathbf{W}^{\top}\mathbb{E}[\mathbf{f}]\geq(1-\sin\alpha)\lVert \mathbf{W} \rVert^2$, and 
    ii) for $r\in\{2,3,4\}$, $\mathbb{E}\lVert \mathbf{f}\rVert^{r}$ is bounded above by 
    $\mathbb{E}\lVert \mathbf{f}\rVert^{r} \leq K \sum_{r_1+\cdots+r_{N-A}=r}\mathbb{E}\lVert \mathbf{W}\rVert^{r_1}\ldots\mathbb{E}\lVert \mathbf{W}\rVert^{r_{N-A}}$ where $K$ denotes a generic constant.
\end{definition}

Lemma~\ref{lemma:Byzantine_resilience} below states that if the standard deviation caused by random sample selection and quantization is smaller than the norm of the true gradient, and $2A+2 < N-m$, then the aggregation function $f$ from \eqref{eq:update_eq} is $(\alpha,A)$-Byzantine resilient where $\alpha$ depends on the ratio of the standard deviation over the norm of the gradient~\cite{blanchard2017byzantine}. 
% Lemma~\ref{lemma:Byzantine_resilience} below states that if the standard deviation caused by random sample selection and quantization is smaller than the norm of the true gradient, and $2A+2 < N-m$, then our aggregation function $f$ in \eqref{eq:update_eq} is $(\alpha,A)$-Byzantine Resilient~\cite{blanchard2017byzantine} where $\alpha$ depends on the ratio of the standard deviation over the norm of the gradient. 
% Without loss of generality, in the following, we assume that the first $A$ users are Byzantine. 
% Lemma~\ref{lemma:Byzantine_resilience} below states that if the standard deviation caused by the random selection and quantization is relatively small compared to the norm of the true gradient and $2A+2 <N-m$, our aggregation function $f$ is $(\alpha,A)$-Byzantine Resilient~\cite{blanchard2017byzantine} where $\alpha$ depends on the ratio of the deviation over the norm of the gradient. 
\begin{lemma}\label{lemma:Byzantine_resilience}
    % (Proposition 3 of~\cite{blanchard2017byzantine}) 
    Assume that $2A+2 < N-m$ and $\eta(N,A)\sqrt{d}\sigma^{\prime} < \lVert \nabla C(\mathbf{w}) \rVert$ where 
    \begin{equation}
        \eta(N,A) := \sqrt{2\Big( N-A + \frac{A(N-A-2)+A^2(N-A-1)}{N-2A-2}\Big)}.  
    \end{equation}
Let $\mathbf{w}_1,\ldots,\mathbf{w}_N$ be i.i.d. random vectors in $\mathbb{R}^d$ such that $\mathbf{w}_i \sim \mathbf{w}$ with $\mathbb{E}_{\xi}[g(\mathbf{w},\xi)]=\nabla C(\mathbf{w})$ and $\mathbb{E}_\xi \lVert g(\mathbf{w},\xi) - \nabla C(\mathbf{w}) \rVert^2 = d\sigma^2(\mathbf{w})$. Then, the aggregation function $f$ from \eqref{eq:update_eq} is $(\alpha,A)$-Byzantine resilient where $0\leq\alpha<\pi/2$ is defined by $\sin{\alpha} = \frac{\eta(N,A)\sqrt{d}\sigma^{\prime}}{\lVert \nabla C(\mathbf{w}) \rVert}$. 
\end{lemma}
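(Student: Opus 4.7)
The plan is to reduce the claim to the Byzantine-resilience analysis of the multi-Krum rule of Blanchard et al.~\cite{blanchard2017byzantine}, with the classical honest gradient estimator replaced by its stochastically quantized version. By Lemma~\ref{lemma:q_prop}, the quantized honest updates $Q_q(g(\mathbf{w},\xi_i))$ are i.i.d.\ unbiased estimators of $\nabla C(\mathbf{w})$ whose total variance is at most $d\sigma^{\prime\,2}$, where $\sigma'$ absorbs the $1/(4q^2)$ quantization penalty. Consequently, the only effect of quantization on the Byzantine-resilience argument is to replace $\sigma$ by $\sigma'$ throughout, which is exactly how $\sigma'$ appears in both the hypothesis $\eta(N,A)\sqrt{d}\sigma' < \|\nabla C(\mathbf{w})\|$ and in the definition of $\sin\alpha$.

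To verify part (i) of Definition~\ref{def:byzantine_resilience}, I would fix any index $k$ picked by multi-Krum and split into cases. If $k$ is honest, its conditional expectation still projects positively onto $\nabla C(\mathbf{w})$ up to the variance term. If $k$ is Byzantine, I would use the defining property of the selection rule: $k$ was chosen only because its score $s(k)=\sum_{u\in\mathcal{I}_k}\|\bar{\mathbf{w}}_k-\bar{\mathbf{w}}_u\|^2$ was no larger than the score of some honest candidate $j$. Applying the triangle inequality to each pair $\|\bar{\mathbf{w}}_k-\bar{\mathbf{w}}_u\|^2$ and taking expectations produces a bound on $\mathbb{E}\|\bar{\mathbf{w}}_k-\nabla C(\mathbf{w})\|$ in terms of honest-to-honest distances, and the combinatorial accounting of \cite{blanchard2017byzantine} yields the constant $\eta(N,A)\sqrt{d}\sigma'$. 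Summing over the $m$ selected users and dotting with $\mathbf{W}=\nabla C(\mathbf{w})$ then gives $\mathbf{W}^\top \mathbb{E}[\mathbf{f}] \ge (1-\sin\alpha)\|\mathbf{W}\|^2$ with the stated $\sin\alpha$.

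For part (ii), I would observe that each vector summed into $\mathbf{f}$ is either honest, in which case its $r$-th moment is controlled directly by the moments of $\mathbf{W}$ together with $\sigma'$, or Byzantine, in which case the same score-comparison inequality bounds $\|\bar{\mathbf{w}}_k\|$ by a constant multiple of a sum of honest norms. Expanding $\mathbb{E}\|\mathbf{f}\|^r$ for $r\in\{2,3,4\}$, applying H\"older's inequality, and using the i.i.d.\ structure of the honest vectors then produces a bound of the required form $K\sum_{r_1+\cdots+r_{N-A}=r}\mathbb{E}\|\mathbf{W}\|^{r_1}\cdots\mathbb{E}\|\mathbf{W}\|^{r_{N-A}}$ for a generic constant $K$.

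The main obstacle, compared with the single-Krum analysis in \cite{blanchard2017byzantine}, is the multi-Krum accounting over $m$ selection iterations. At the $k$-th iteration the unselected pool has size $N-k+1$ and each score uses $(N-k+1)-A-2$ nearest neighbors, so I must verify that the hypothesis $2A+2 < N-m$ keeps this neighbor count strictly positive, and in fact keeps enough honest users in the pool, at every iteration $k\in[m]$; the constant $\eta(N,A)$ corresponds to plugging in the worst-case (first-iteration) counts. The hard part will be confirming that the per-user deviation bound $\eta(N,A)\sqrt{d}\sigma'$ can be taken uniformly in $k$ and that the errors do not compound across iterations, which works precisely because $\mathbb{E}[\mathbf{f}]$ scales as $m\,\nabla C(\mathbf{w})$ while the Byzantine deviations of the at most $A$ contaminated selections stay bounded by the same $\eta(N,A)\sqrt{d}\sigma'$ per term.
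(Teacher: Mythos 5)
Your proposal is correct and follows essentially the same route as the paper: both use Lemma~\ref{lemma:q_prop} to establish that the quantized honest updates are i.i.d.\ unbiased estimators of $\nabla C(\mathbf{w})$ with variance bounded by $d\sigma^{\prime\,2}$, and then conclude $(\alpha,A)$-Byzantine resilience of the quantized multi-Krum rule from the analysis of Blanchard et al.\ with $\sigma$ replaced by $\sigma'$. The only difference is that the paper invokes Proposition~3 of \cite{blanchard2017byzantine} as a black box, whereas you sketch re-deriving that proposition (score comparison, moment bounds, and the $2A+2<N-m$ accounting), which is consistent with the cited argument but not needed beyond the citation.
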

    
\begin{proof} 
    From Lemma~\ref{lemma:q_prop}, $\mathbb{E}_{Q,\xi} [Q_q(g(\mathbf{w},\xi))] = \nabla C(\mathbf{w})$ and $\mathbb{E}_{Q,\xi} \lVert Q_q(g(\mathbf{w},\xi)) - \nabla C(\mathbf{w}) \rVert^2 \leq d\sigma ^{\prime \;2}(\mathbf{w})$. 
    Then, the  quantized multi-Krum algorithm described in  Section~\ref{sec:selection}, where the multi-Krum algorithm applied to the quantized vectors $Q_q(\mathbf{w}_i)$, is $(\alpha,A)$-Byzantine resilient from Proposition $3$ of~\cite{blanchard2017byzantine}. Hence, function $f$ in \eqref{eq:update_eq} is $(\alpha,A)$-Byzantine resilient.     
    % {\color{blue} 
    % Old-version - From Lemma~\ref{lemma:q_prop}, $\mathbb{E}_{Q,\xi} [Q(g(\mathbf{w},\xi))] = \nabla C(\mathbf{w})$ and $\mathbb{E}_{Q,\xi} \lVert Q(g(\mathbf{w},\xi)) - \nabla C(\mathbf{w}) \rVert^2 \leq d\sigma ^{\prime \;2}(\mathbf{w})$. Then, the $m$-Krum function with $Q(\mathbf{w}_i)$'s and $\mathbf{b}_i$'s is $(\alpha,A)$-Byzantine Resilient from the Proposition $3$ of~\cite{blanchard2017byzantine}.
    % }
\end{proof}

% Next, we consider the second case

We now state our main result for the theoretical performance guarantees of {\byz}. 
% Theorem~\ref{thm:convergence} demonstrates the robustness of the protocol  convergence of the protocol, 
% Theorem~\ref{thm:convergence} presents the robustness of the protocol  convergence of the protocol, 
% We now state our first result for the theoretical performance guarantees of {\byz}.

\begin{theorem}\label{thm:convergence}
    We assume that:  
    1) the cost function $C$ is three times differentiable with continuous derivatives, and is bounded from below, i.e., $C(x)\geq0$; 
    2) the learning rates satisfy, $\sum_{t=1}^{\infty} \gamma^{(t)} = \infty$ and $\sum_{t=1}^{\infty} ({\gamma^{(t)}})^2 < \infty$;
    3) the second, third, and fourth moments of the quantized gradient estimator do not grow too fast with the norm of the model, i.e., $\forall r\in{2,3,4}$, $\mathbb{E}_{Q,\xi}\lVert Q_q(g(\mathbf{w},\xi))\rVert^r\leq A_r + B_r\lVert \mathbf{w} \rVert^r$ for some constants $A_r$ and $B_r$;
    4) there exist a constant $0\leq \alpha <\pi/2$ such that for all $\mathbf{w}\in\mathbb{R}^d$, $\eta(N,A)\sqrt{d}\sigma^{\prime}(\mathbf{w}) \leq \lVert \nabla C(\mathbf{w}) \rVert \sin{\alpha}$;
    5) the gradient of the cost function $C$ satisfies that for $\lVert \mathbf{w} \rVert^2\geq R$, there exist constants $\epsilon>0$ and $0\leq\beta<\pi/2-\alpha$ such that 
    \begin{align}
        \lVert \nabla C(\mathbf{w}) \rVert \geq \epsilon > 0,\\
        \frac{\mathbf{w}^{\top} C(\mathbf{w})}{\lVert \mathbf{w} \rVert \cdot \lVert \nabla C(\mathbf{w})\rVert} \geq \cos{\beta}.
    \end{align}
    Then, {\byz} guarantees,
    \begin{itemize}
        \item (Robustness against Byzantine users) The protocol executes correctly against up to $A$ Byzantine users and the trained model is $(\alpha,A)$-Byzantine resilient.
        
        \item (Convergence) The sequence of the gradients $\nabla C(\mathbf{w}^{(t)})$ converges almost surely to zero,
        \begin{equation}
        \nabla C(\mathbf{w}^{(t)})\xrightarrow[t \to \infty]{a.s.} 0. 
        \end{equation}
        
        \item (Privacy) 
        The server or any group of up to $T$ users cannot compute an unknown local update. 
        For any set of users  $\mathcal{T}\subset[N]$ of size at most $T$,
        \begin{align}
            &\mathbb{P}[\text{User $i$ has secret $\overline{\mathbf{w}}_i$} \mid \text{view}_{\mathcal{T}}] \notag \\ 
            &\hspace{2cm}= \mathbb{P}[\text{User $i$ has secret $\overline{\mathbf{w}}_i$}] \label{eq:privacy}
        \end{align}
        for all $i\in[N]\setminus\mathcal{T}$ where $\text{view}_{\mathcal{T}}$ denotes the messages that the members of $\mathcal{T}$ receive.
        % \footnote{As described in Section~\ref{Sec:VSS}, we assume the intractability of computing discrete logarithms, hence the server or colluding users cannot compute $\overline{\mathbf{w}}_i$ from $\mathbf{c}_{i0}$ in~\eqref{eq:def:commitment}.} 
% \item (Tolerance to user dropouts) The protocol can tolerate up to $D$ user dropouts. 
    \end{itemize}
    % for any $N\geq D+2A+2T+1$. 
    for any $N\geq 2A+1+\max\{m+2,D+2T\}$, where $m$ is the number of selected models for aggregation.
\end{theorem}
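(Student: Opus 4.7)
I would split the theorem into its three declared conclusions (Byzantine-resilience, convergence, privacy) and establish each separately, reusing Lemmas~\ref{lemma:q_prop} and \ref{lemma:Byzantine_resilience} to handle the fact that the framework operates on quantized vectors rather than on real-valued gradients.

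\textbf{Robustness.} My first step is to verify that the server always reconstructs the correct pairwise distances and the correct selected-user aggregate, despite up to $A$ Byzantine users and up to $D$ dropouts. Feldman's verifiable secret sharing together with the check in~\eqref{eq:VSStest} ensures that any corrupted share is detected by the honest receiver, so the only remaining ways for a Byzantine user to interfere are (i) modifying its local model and (ii) sending false scalars $d^{(i)}_{jk}$ or $\mathbf{s}_i$ to the server. For (ii), I would observe that $\{d^{(i)}_{jk}\}_{i\in[N]}$ are evaluations at the $\theta_i$'s of the polynomial $h_{jk}$ in~\eqref{eq:poly_dist} of degree at most $2T$, and $\{\mathbf{s}_i\}_{i\in[N]}$ are evaluations of $h$ in~\eqref{eq:poly_aggr} of degree at most $T$. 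Both correspond to Reed--Solomon codewords, and the classical error-erasure threshold $N-D\ge 2A + \deg + 1$ translates into $N\ge 2A+D+2T+1$ for the distance code and $N\ge 2A+D+T+1$ for the aggregate code; the former is implied by $N\ge 2A+1+D+2T$ in the theorem. Given that the server recovers the true pairwise distances, attack mode (i) is neutralized by invoking Lemma~\ref{lemma:Byzantine_resilience}, whose hypothesis $2A+2<N-m$ is exactly the other branch of the stated bound $N\ge 2A+1+(m+2)$. Combining these two arguments yields that $f$ in~\eqref{eq:update_eq} is $(\alpha,A)$-Byzantine resilient.

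\textbf{Convergence.} Here I would not re-derive anything from scratch but instead reduce to Proposition~2 of Blanchard et al.~\cite{blanchard2017byzantine} applied with the quantized gradient estimator $Q_q(g(\mathbf{w},\xi))$ in place of $g(\mathbf{w},\xi)$. Lemma~\ref{lemma:q_prop} supplies exactly the two ingredients needed to make this substitution legitimate: unbiasedness and a bounded-variance term $\sigma^{\prime 2}=\frac{1}{4q^2}+\sigma^2$. The theorem's hypotheses (1)--(5) are then precisely the smoothness, learning-rate, moment-growth, resilience-angle, and confinement conditions of that proposition, restated for the quantized variance $\sigma'$. The $(\alpha,A)$-Byzantine resilience established in the previous bullet, together with these hypotheses, drives the standard Lyapunov / almost-supermartingale argument on $C(\mathbf{w}^{(t)})$ to the conclusion $\nabla C(\mathbf{w}^{(t)})\xrightarrow{a.s.}0$. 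The main thing to check carefully is that the score-based selection~\eqref{eq:mKrum_iter} of $m$ users, applied to quantized vectors, satisfies the moment bounds used by Blanchard et al.; this is handled by assumption~(3) and Lemma~\ref{lemma:q_prop} together.

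\textbf{Privacy.} For a coalition $\mathcal{T}$ of at most $T$ users, I would enumerate what is in $\text{view}_{\mathcal{T}}$ and argue each component is independent of $\overline{\mathbf{w}}_i$ for any $i\notin\mathcal{T}$. The raw Shamir shares $\{\mathbf{s}_{ij}\}_{j\in\mathcal{T}}$ of a polynomial of degree $T$ are, by the standard argument, information-theoretically independent of $\overline{\mathbf{w}}_i$ because $|\mathcal{T}|\le T$ evaluations of a random degree-$T$ polynomial carry no information about $f_i(0)$. The commitments $\{\mathbf{c}_{ij}\}$ are computationally hiding under the discrete-log assumption~\cite{feldman1987practical}. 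The distance computations $d^{(i)}_{jk}$ and partial aggregates $\mathbf{s}_i$ are functions of the coalition's own shares only, so they add nothing new. Finally, for privacy against the server I would note that the server's view consists of $\{d^{(i)}_{jk}\}$ and $\{\mathbf{s}_i\}$, whose Reed--Solomon decoding reveals only the pairwise distances and the selected aggregate, and the condition $N\ge 2A+1+D+2T$ guarantees that strictly fewer than the reconstruction threshold of any \emph{individual} $\overline{\mathbf{w}}_i$ is ever exposed. Combining these pieces yields~\eqref{eq:privacy}.

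The main obstacle I anticipate is the convergence step, specifically matching the moment-growth hypothesis~(3) to the quantized setting (the rounding error~$1/(4q^2)$ must not spoil the bound $A_r+B_r\|\mathbf{w}\|^r$) and confirming that the selection rule in~\eqref{eq:mKrum_iter}, which depends stochastically on the quantization noise through the recovered distances, still fits inside the Byzantine-resilience template of Blanchard et al. Everything else is essentially bookkeeping built on Lemmas~\ref{lemma:q_prop}--\ref{lemma:Byzantine_resilience} and the Reed--Solomon threshold.
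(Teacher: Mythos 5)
Your proposal is correct and follows essentially the same route as the paper's proof: the same three-way decomposition, the same Reed--Solomon error-erasure thresholds ($N\ge D+2A+2T+1$ and $N\ge D+2A+T+1$) combined with the Lemma~\ref{lemma:Byzantine_resilience} condition $2A+2<N-m$ for robustness, the same reduction of convergence to Proposition~2 of~\cite{blanchard2017byzantine} via Lemmas~\ref{lemma:q_prop}--\ref{lemma:Byzantine_resilience}, and the same Shamir-sharing counting argument (with commitments hidden under the discrete-log assumption) for privacy. No gaps beyond the level of detail the paper itself leaves to the cited references.
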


\begin{remark} The two conditions $\sum_{t=1}^{\infty} \gamma^{(t)} = \infty$ and $\sum_{t=1}^{\infty} ({\gamma^{(t)}})^2 < \infty$ are instrumental in the convergence of stochastic gradient descent algorithms \cite{bottou1998online}. Condition $\sum_{t=1}^{\infty} ({\gamma^{(t)}})^2 < \infty$ states that the learning rates decrease fast enough, whereas  condition $\sum_{t=1}^{\infty} \gamma^{(t)} = \infty$  bounds the rate of their decrease, to ensure that the learning rates do not decrease too fast. 
% The two conditions $\sum_{t=1}^{\infty} \gamma^{(t)} = \infty$ and $\sum_{t=1}^{\infty} ({\gamma^{(t)}})^2 < \infty$ are instrumental in the convergence of stochastic gradient descent algorithms \cite{bottou1998online}. The first condition states that the learning rates decrease fast enough, whereas the second condition bounds the rate of their decrease, to ensure that they do not decrease too fast. 
\end{remark}
% {\color{red}  Comment on why we prove convergence of the gradient to zero (rather than then convergence to global/local optimum) -  similar to the online learning paper} 

\begin{remark}
We consider a general (possibly non-convex) objective function $C$.  In such scenarios, proving the convergence of the model directly is challenging, and various approaches have been proposed instead. Our approach follows \cite{bottou1998online} and \cite{blanchard2017machine}, where we prove the convergence of the gradient to a flat region instead. We note, however, that such a region may refer to any stationary point, including the local minima as well as saddle and extremal points.    
\end{remark}

% {\color{red} Comment on what each condition means, such as in the Krum paper. }

\begin{proof}

% (Resilience against Byzantine users)
(Robustness against Byzantine users) 
The $(\alpha,A)$-Byzantine resilience of the trained model follows from Lemma~\ref{lemma:Byzantine_resilience}. 
% We first provide sufficient conditions for which {\byz} can guarantee robustness against Byzantine users, to ensure that the protocol can correctly evaluate the update function  \eqref{eq:update_eq}.  
We next provide sufficient conditions for {\byz} to correctly evaluate the update function  \eqref{eq:update_eq}, in the presence of $A$ Byzantine users.
Byzantine users may send any arbitrary random vector to the server or other users in every step of the protocol in Section~\ref{Sec:framework}. 
In particular, Byzantine users can create and send incorrect computations in three attack scenarios: i) sending invalid secret shares $\mathbf{s}_{ij}$ in~\eqref{eq:VSS_def}, ii) sending incorrect secure distance computations $d_{jk}^{(i)}$ in~\eqref{eq:dist}, and iii) sending incorrect aggregate of the secret shares $\mathbf{s}_i$ in~\eqref{eq:SS_sum}.
% \begin{enumerate}
%     \item Incorrect secret share $\mathbf{s}_{ij}$ in~\eqref{eq:VSS_def}%: It can be verified by ~\eqref{eq:VSStest}.
    
%     \item Incorrect secure distance computation $d_{jk}^{(i)}$ in~\eqref{eq:dist}%: It can be detected and corrected by the Reed-Solomon decoding algorithm.
    
%     \item Incorrect secure model aggregation $\mathbf{s}_i$ in~\eqref{eq:SS_sum}%: It can also be detected and corrected by the Reed-Solomon decoding algorithm.
    
%     \item Incorrect model $\overline{\mathbf{w}}_i$ in~\eqref{eq:VSS_gen}
% \end{enumerate} 
The first attack scenario occurs when the secret shares
$\mathbf{s}_{ij}$ in~\eqref{eq:VSS_def} do not refer to the same polynomial from \eqref{eq:VSS_gen}. 
% {\byz} leverages verifiable secret sharing to prevent such attacks. 
% To prevent such attacks, {\byz} utilizes verifiable secret sharing. 
{\byz} utilizes verifiable secret sharing to prevent such attempts. The correctness (validity) of the secret shares can be verified by testing~\eqref{eq:VSStest}, whenever the majority of the surviving users are honest, i.e., $N>2A+D$~\cite{feldman1987practical,dolev1983authenticated}.
% The first attack scenario refers to the case when the secret shares $\mathbf{s}_{ij}$ in~\eqref{eq:VSS_def} sent from an adversary being invalid. This can occur when the secret shares are not created from the same polynomial.  
% For the first attack scenario, the correctness (validity) of the secret shares can be verified by testing~\eqref{eq:VSStest} if the majority of surviving users are honest, i.e., $N>2A+D$~\cite{feldman1987practical,dolev1983authenticated}.
% For the first attack scenario, the correctness (validity) of the secret shares can be verified by testing~\eqref{eq:VSStest} if the majority of users are honest, i.e., $N>2A$~\cite{feldman1987practical,dolev1983authenticated}. 

The second attack scenario can be detected and corrected by the Reed-Solomon decoding algorithm. In particular, as described in Section~\ref{sec:selection}, given $j,k\in[N]$, $\{d_{jk}^{(i)}\}_{i\in[N]}$ can be viewed as $N$ evaluation points of the  polynomial $h_{jk}$ given in~\eqref{eq:poly_dist} whose degree is at most $2T$. The decoding process at the server then corresponds to the decoding of an  $[N,2T+1,N-2T]_p$ Reed-Solomon code with at most $D$ erasures and at most $A$ errors. As an $[n,k,n-k+1]_p$ Reed-Solomon code with $e$ erasures can tolerate a maximum number of $\lfloor \frac{n-k-e}{2}\rfloor$  errors~\cite{gao2003new}, the server can recover the correct pairwise distances as long as  $A\leq \lfloor \frac{N-(2T+1)-D}{2}\rfloor$, i.e. $N\geq D+2A+2T+1$. 

The third attack scenario can also be detected and corrected by the Reed-Solomon decoding algorithm. As described in Section~\ref{sec:secure_aggr}, $\{\mathbf{s}_i\}_{i\in[N]}$ are evaluation points of polynomial $h$ in~\eqref{eq:poly_aggr} of degree at most $T$. This decoding process corresponds to the  decoding of an $[N,T+1,N-T]_p$ Reed-Solomon code with at most $D$ erasures and at most $A$ errors. As such, the server can recover the desired aggregate model $h(0)=\sum_{j\in{\mathcal{S}}}\overline{\mathbf{w}}_j$ as long as $N\geq D+2A+T+1$. Therefore, combining with the condition of Lemma~\ref{lemma:Byzantine_resilience}, the sufficient conditions under which {\byz} guarantees robustness against Byzantine users is given by
\begin{equation}\label{eq:sufficient_cond}
    N\geq 2A + 1 + \max\{m+2,D+2T\}.
\end{equation}

(Convergence)
We now consider the update equation in \eqref{eq:update_eq} and prove the convergence of the random sequence $\nabla C(\mathbf{w}^{(t)})$.
From Lemma~\ref{lemma:Byzantine_resilience}, the quantized multi-Krum function $f$ in \eqref{eq:update_eq} is $(\alpha,A)$-Byzantine resilient.
% From Lemma~\ref{lemma:Byzantine_resilience}, the quantized multi-Krum function $f\big(Q_q(\mathbf{w}^{(t)}_1),\ldots,Q_q(\mathbf{w}^{(t)}_N)\big)$ in \eqref{eq:update_eq} is $(\alpha,A)$-Byzantine resilient.
% If both conditions in~\eqref{eq:sufficient_cond} and $2A+2<N-m$ in Lemma~\ref{lemma:Byzantine_resilience} are satisfied, i.e., $N\geq 2A+1+\max\{m+2,D+2T\}$, $f\big(Q(\mathbf{w}^{(t)}_1),\ldots,Q(\mathbf{w}^{(t)}_N)\big)$ in \eqref{eq:update_eq} is $(\alpha,A)$-Byzantine resilient from the Lemma~\ref{lemma:Byzantine_resilience}.
Hence, from Proposition $2$ of~\cite{blanchard2017byzantine}, $\nabla C(\mathbf{w}^{(t)})$ converges almost surely to zero, 
% Hence, from the Proposition $2$ of~\cite{blanchard2017byzantine}, the random sequence $\nabla C(\mathbf{w}^{(t)})$ converges almost surely to zero, 
\begin{equation}
    \nabla C(\mathbf{w}^{(t)})\xrightarrow[t \to \infty]{a.s.} 0. 
\end{equation}

(Privacy) 
% As described in Section~\ref{Sec:VSS}, we assume the intractability of computing discrete logarithms, hence the server or any user cannot compute $\overline{\mathbf{w}}_i$ from $\mathbf{c}_{i0}$ in~\eqref{eq:def:commitment}.
As described in Section~\ref{Sec:VSS}, we assume the intractability of computing discrete logarithms, hence the server or any user cannot compute $\overline{\mathbf{w}}_i$ from $\mathbf{c}_{i0}$ in~\eqref{eq:def:commitment}.  
% We consider two security models: i) up to $T$ users collude and ii) the server and up to $T-1$ users collude to reveal the information of individual model.
% Below proof corresponds to the first model: up to $T$ users collude (the server do not collude)
It is therefore sufficient to prove the privacy of each individual model against a group of $T$ colluding users, in the case where $\mathcal{T}$ has size $T$. 
If $T$ users cannot get any information about $\overline{\mathbf{w}}_i$, then neither can fewer than $T$ users. 
Without loss of generality, let $\mathcal{T}=\{1,\ldots,T\}=[T]$ and
$\text{view}_{\mathcal{T}} = \big\{ (\mathbf{s}_{kj} )_{k\in[N]}, \big( d^{j}_{kl} \big)_{k,l\in[N]}, \mathbf{s}_j \big\}_{j\in[T]}$ where $\mathbf{s}_{kj}$ in~\eqref{eq:VSS_def} is the secret share of $\overline{\mathbf{w}}_k$ sent from user $k$ to user $j$, $d^{j}_{kl}$ in~\eqref{eq:dist} is the pairwise distance of the secret shares sent from users $k$ and $l$ to user $j$, and $\mathbf{s}_j$ in~\eqref{eq:SS_sum} is the aggregate of the secret shares.
As $\big\{ \big( d^{j}_{kl} \big)_{k,l\in[N]}\big\}_{j\in[T]}$ and $\big\{ \mathbf{s}_j\big\}_{j\in[T]}$ are determined by 
$\big\{ \big( \mathbf{s}_{kj} \big)_{k,l\in[N]}\big\}_{j\in[T]}$, we can simplify the left hand side (LHS) of \eqref{eq:privacy} as
\begin{align}
    &\mathbb{P}[\text{User $i$ has secret $\overline{\mathbf{w}}_i$} \mid \text{view}_{\mathcal{T}}] \notag\\
    &\qquad= \mathbb{P}[\text{User $i$ has secret $\overline{\mathbf{w}}_i$} \mid \{ \mathbf{s}_{kj} \}_{k\in[N],j\in[T]}]\\
    &\qquad= \mathbb{P}[\text{User $i$ has secret $\overline{\mathbf{w}}_i$} \mid \{ \mathbf{s}_{ij} \}_{j\in[T]}]. \label{eq:s_kj_indep}
    % \text{LHS} = \mathbb{P}[\text{User $i$ has secret $\overline{\mathbf{w}}_i$} \mid \{ \mathbf{s}_{kj} \}_{k\in[N],j\in[T]}]
\end{align}
where \eqref{eq:s_kj_indep} follows from the fact that $\mathbf{s}_{kj}$ is independent of $\overline{\mathbf{w}}_i$ for any $k\neq i$, $\mathbf{s}_{kj}$ is independent of $\overline{\mathbf{w}}_i$.
% For any $k\neq i$, $\mathbf{s}_{kj}$ is independent of $\overline{\mathbf{w}}_i$. 
% Hence, we have,
% \begin{align}
%     &\mathbb{P}[\text{User $i$ has secret $\overline{\mathbf{w}}_i$} \mid \text{view}_{\mathcal{T}}] \notag\\
%     &\quad= \mathbb{P}[\text{User $i$ has secret $\overline{\mathbf{w}}_i$} \mid \{ \mathbf{s}_{ij} \}_{j\in[T]}].
%     % \text{LHS} = \mathbb{P}[\text{User $i$ has secret $\overline{\mathbf{w}}_i$} \mid \{ \mathbf{s}_{ij} \}_{j\in[T]}].
% \end{align}
Then, for any realization of vectors $\mathbf{\rho}_0,\ldots,\mathbf{\rho}_T\in\mathbb{F}_p^d$, we obtain,
\begin{align}
    &\mathbb{P}[\text{User $i$ has secret $\overline{\mathbf{w}}_i$} \mid \text{view}_{\mathcal{T}}] \notag\\
    &\qquad= \mathbb{P}[{\overline{\mathbf{w}}_i = \mathbf{\rho}_0} \mid \mathbf{s}_{i1}= \mathbf{\rho}_1,\ldots,\mathbf{s}_{iT}= \mathbf{\rho}_T] \notag \\
    &\qquad= \frac{\mathbb{P}[\overline{\mathbf{w}}_i = \mathbf{\rho}_0, \mathbf{s}_{i1}= \mathbf{\rho}_1,\ldots,\mathbf{s}_{iT}= \mathbf{\rho}_T]}{\mathbb{P}[\mathbf{s}_{i1}= \mathbf{\rho}_1,\ldots,\mathbf{s}_{iT}= \mathbf{\rho}_T]} \notag \\
    &\qquad= \frac{1/|\mathbb{F}_p^d|^{T+1}}{1/|\mathbb{F}_p^d|^{T}} \label{eq:prob_equality}\\
    &\qquad= \frac{1}{|\mathbb{F}_p^d|} = \mathbb{P}[{\overline{\mathbf{w}}_i = \mathbf{\rho}_0}] \notag
    % &\qquad= \frac{1}{|\mathbb{F}_p^d|} \notag \\
    % &\qquad= \mathbb{P}[{\overline{\mathbf{w}}_i = \mathbf{\rho}_0}] \notag
\end{align}
where \eqref{eq:prob_equality} follows from the fact that any $T+1$ evaluation points define a unique polynomial of degree $T$, which completes the proof of privacy. 
% {\color{red}(Tolerance to user dropouts)}
\end{proof}

% \begin{theorem}
%     (Privacy)
% \end{theorem}

% \subsection{Discussion on server-user collusion} \ref{sec:server-user}

\subsection{Complexity Analysis} \label{Sec:Complexity}
\begin{table}[t!]
\small
\caption{\color{black}Complexity summary of {\byz}. }
\vspace{-0.2cm}
\label{tbl:complexity}
% \vskip 0.15in
\begin{center}
% \begin{small}
% \begin{sc}
\begin{tabular}{ccc}
\toprule
 & Computation & Communication   \\
\midrule
 Server  & $O((N^3 + dN)\log^2{N}\log\log{N})$  & $O(dN+N^3)$  \\
 User    & $O(dN\log^2{N}+dN^2)$                & $O(dN+N^2)$  \\
\bottomrule
\end{tabular}
% \end{sc}
% \end{small}
\end{center}
\vspace{-0.3cm}
% \vskip -0.1in
\end{table}

{\color{black}

\noindent In this section, we analyze the complexity of {\byz} with respect to the number of users, $N$, and model dimension $d$.

% \subsubsection{Complexity Analysis of User}
% \newline

\noindent {\bf Complexity Analysis of the Users:}
User $i$'s computation cost can be broken into three parts: 1) generating the secret shares $\mathbf{s}_{ij}$ in~\eqref{eq:VSS_gen} for $j\in[N]$, 2) computing the pairwise distances $d^{(i)}_{jk}$ in~\eqref{eq:dist} for $j,k\in[N]$, and 3) aggregating the secret shares belonging to the selected users from~\eqref{eq:SS_sum}. First, generating $N$ secret shares of a vector with dimension $d$ has a computation cost of $O(dN\log^2 N)$~\cite{shamir1979share}. Second, as there are $O(N^2)$ pairwise distances, computing the pairwise distances has a computation cost of $O(d N^2)$ in total. Third, when the number of selected users $m=|\mathcal{S}|$ is $O(N)$, aggregating the secret shares belonging to the selected users has a computation cost of $O(dN)$. Therefore, the overall computation cost of each user is $O(dN\log^2 N + dN^2)$.

User $i$'s communication cost can be broken to three parts: 1) sending the secret share $\mathbf{s}_{ij}$ to user $j\in[N]$, 2) sending the secret shares of the pairwise distances $d^{(i)}_{jk}$ to the server for $j,k\in[N]$, and 3) sending the aggregate of secret the shares $\mathbf{s}_i$ in~\eqref{eq:SS_sum} to the server. The communication cost of the three parts are $O(dN)$, $O(N^2)$, and $O(d)$, respectively. Therefore, the overall communication cost of each user is $O(dN + N^2)$.

\noindent {\bf Complexity Analysis of the Server:}
Computation cost of the server can be broken into three parts: 1) decoding the pairwise distances by recovering $h_{jk}(0)$ in~\eqref{eq:poly_dist} for $j,k\in[N]$, 2) carrying out the multi-Krum algorithm to select the $m$ users for aggregation, and 3) decoding the aggregate of the selected models by recovering $h(0)$ in~\eqref{eq:poly_aggr}. As described in Section~\ref{sec:selection}, recovering the polynomial $h_{jk}$ corresponds to decoding an $[N,2T+1,N-2T]_p$ Reed-Solomon code with at most $D$ erasures and at most $A$ errors, which has an $O(N\log^2 N \log \log N)$ computation cost~\cite{gao2003new}. As there are $O(N^2)$ pairs and each pair is embedded in a single polynomial, the computation cost of the first part is $O(N^3\log^2 N \log \log N)$ in total. The computation cost of the second part is $O(dN^2 + N^2\log N)$~\cite{blanchard2017machine}. As the dimension of $h(0)$ is $d$, the  computation cost of the third part is $O(dN\log^2 N \log \log N)$. Overall, the computation cost of the server is $O((N^3+dN)\log^2 N \log\log{N})$.

Communication cost can be broken into two parts: 1) receiving the secret shares of the  pairwise distances $d^{(i)}_{jk}$ from users $i\in[N]$ for $j,k\in[N]$ and 2) receiving the aggregate of secret shared models $\mathbf{s}_i$ from users $i\in[N]$. The communication cost of the two parts are $O(N^3)$ and $O(dN)$, respectively. Overall, the  communication cost of the server is $O(dN + N^3)$.

We summarize the complexity analysis in Table~\ref{tbl:complexity}. 
As can be observed in Table~\ref{tbl:complexity}, the server has a communication cost of $O(N^3)$  and a computation cost of  $O(N^3\log^2{N}\log\log{N})$, which is due to the recovery of $N^2$ pairwise distances.
Although the distances are scalar valued, the overhead can become a limitation for very large-scale networks. 
In the next subsection, we propose a generalized framework to reduce the communication cost from $O(N^3)$ to $O(N^2)$ as well as to reduce the computation cost from $O(N^3\log^2{N}\log\log{N})$ to $O(N^2\log^2{N}\log\log{N})$. 
}

\subsection{The Generalized {\byz} Framework} \label{Sec:NewAlgorithm}
The key idea of the generalized framework is to partition the set of $N(N-1)/2$ pairwise distances into sets of size $K$, and embed the $K$ distances in a single polynomial. 
% $K$ is a parameter to determine the amount of complexity reduction and also 
Consequently, the number of polynomials to embed the $N(N-1)/2$ pairwise distances can be reduced from $O(N^2)$ to $O(N^2/K)$.
Each user then sends a single evaluation point of each polynomial to the server, which has an $O(N^3/K)$ communication cost in total. 
By setting $K=O(N)$, the generalized {\byz} framework can achieve $O(N^2)$ communication complexity. 
% The parameter $K$ provides a trade-offs between the complexity reduction and the recovery threshold, the minimum number of results from the users to recover the pairwise distances. Detailed explanation of the trade-offs will be described later. 

\begin{table}[t!]
\small
\caption{\color{black}Complexity summary of generalized {\byz}. }
\vspace{-0.2cm}
\label{tbl:complexity_genBrea}
% \vskip 0.15in
\begin{center}
% \begin{small}
% \begin{sc}
\begin{tabular}{ccc}
\toprule
 & Computation & Communication   \\
\midrule
 Server  & $O((N^2 + dN)\log^2{N}\log\log{N})$  & $O(dN+N^2)$  \\
 User    & $O(dN\log^2{N}+dN^2)$                & $O(dN+N)$  \\
\bottomrule
\end{tabular}
% \end{sc}
% \end{small}
\end{center}
\vspace{-0.3cm}
% \vskip -0.1in
\end{table}

We now present the details of the generalized {\byz} framework. 
% We now present the generalized {\byz} framework that can achieve the communication cost of $O(N^2)$  and computation cost of $O(N^2\log^2{N}\log\log{N})$.
In the stochastic quantization phase, the generalized {\byz} framework follows the same steps as in Section~\ref{sec:quantization}.
In the verifiable secret sharing phase, user $i\in[N]$ generates secret shares of the quantized model $\overline{\mathbf{w}}_i$ by modifying the random polynomial $f_i$ in~\eqref{eq:VSS_gen} as,
\begin{equation}\label{eq:modified_VSS_gen}
    f_i(\theta) = \overline{\mathbf{w}}_i + \sum_{j=1}^{T} \mathbf{r}_{ij}\theta^{K+j-1}
\end{equation}
where the degree of $f_i$ is increased from $T$ to $K+T-1$.
% The degree of $f_i$ increases from $T$ to $K+T-1$ and the coefficients of the first degree to $K$-th degree terms are set as zero, and the reason of this modification will be explained later.
User $i$ then sends a secret share of $\overline{\mathbf{w}}_i$ to user $j\in[N]$, denoted by $\mathbf{s}_{ij}=f_i(\theta_j)$. 

In the secure distance computation phase, user $i$ computes the pairwise distances $d_{jk}^{(i)}$ as in~\eqref{eq:dist} for $j,k\in[N]$. 
Note that there are $\frac{N(N-1)}{2}$ unique distance values, due to the symmetry of the distance measure. 
% In the next step, users embed their computations into polynomials . To do so, 
The server and the users then agree on a partition $\mathcal{P}_1,\ldots,\mathcal{P}_{G}$ of the $\frac{N(N-1)}{2}$ distances into sets of size $K$, where $G=\lceil \frac{N(N-1)}{2K}\rceil$ and $|\mathcal{P}_k|=K$ for all $k\in[G]$.  $\lceil x\rceil$ denotes the smallest integer greater or equal to $x$. 
% The server and the users then agree on a partition \footnote{This step can be performed offline.} $\mathcal{P}_1,\ldots,\mathcal{P}_{G}$ of the $\frac{N(N-1)}{2}$ distances into sets of size $K$, where $G=\lceil \frac{N(N-1)}{2K}\rceil$ and $|\mathcal{P}_k|=K$ for all $k\in[G]$.  $\lceil x\rceil$ denotes the smallest integer greater or equal to $x$. 
Then, each user embeds the $K$ pairwise distances in each $\mathcal{P}_k$ into a single polynomial, for every $k\in\{1, \ldots, G\}$. 
% To embed the $K$ pairwise distances in a single polynomial, users and server initially agree on $G$ sets of the pairs of users  $\mathcal{P}_1,\ldots,\mathcal{P}_{G}$ where $G=\lceil \frac{N(N-1)}{2K}\rceil$ and $\lceil x\rceil$ denotes the smallest integer greater or equal to $x$, and $|\mathcal{P}_k|=K$ for all $k\in[G]$. 
% To embed the $K$ pairwise distances in a single polynomial, users and server initially agree on the sets of the pairs $\mathcal{P}_1,\ldots,\mathcal{P}_{G}$ where $G=\lceil \frac{N(N-1)}{2K}\rceil$ and $\lceil x\rceil$ denotes the smallest integer greater or equal to $x$, and $|\mathcal{P}_k|=K$ for all $k\in[G]$. 
For instance, let the first set in the partition be $\mathcal{P}_1 = \{(1,2),\ldots,(1,K+1)\}$, Then, one can define a univariate polynomial ${u}_{\mathcal{P}_1}:\mathbb{F}_p\rightarrow\mathbb{F}_p$ of degree $2(K+T-1)+K-1$ to embed the $K$ pairwise distances of the pairs in $\mathcal{P}_1$ as
\begin{align}
    % \mathbf{u}_{\mathcal{P}_1}^{(i)}(\theta_i) = \sum_{k=2}^{K+1} \theta_i^{k-2} d_{1k}^{(i)}
    {u}_{\mathcal{P}_1}(\theta) &= \lVert \mathbf{f}_{1}(\theta) - \mathbf{f}_{2}(\theta) \rVert^2 + \theta\lVert \mathbf{f}_{1}(\theta) - \mathbf{f}_{3}(\theta) \rVert^2 \notag\\
    &\quad  + \cdots + \theta^{K-1}\lVert \mathbf{f}_{1}(\theta) - \mathbf{f}_{K+1}(\theta) \rVert^2.
    \label{eq:K_embed_poly}
\end{align}
As the coefficients from the first degree to the $(K-1)$-th degree terms in~\eqref{eq:modified_VSS_gen} are zero, the  coefficient of the $(k-1)$-th degree term in~\eqref{eq:K_embed_poly} corresponds to the pairwise distance of the $k$-th pair in $\mathcal{P}_1$, i.e., $\lVert \mathbf{f}_{1}(0) - \mathbf{f}_{k+1}(0) \rVert^2 = \lVert \overline{\mathbf{w}}_{1} - \overline{\mathbf{w}}_{k+1} \rVert^2$ for all $k\in[K]$.
User $i$ then sends  ${u}_{\mathcal{P}_1}(\theta_i)$ to the server.
Upon receiving the computation results from a sufficient number of users, the server can decode the pairwise distances of all pairs in $\mathcal{P}_1$ by reconstructing the polynomial ${u}_{\mathcal{P}_1}(\theta_i)$.
As the degree of ${u}_{\mathcal{P}_1}$ is $3K+2T-3$, the minimum number of results the server needs to collect from the users to recover the pairwise distances, i.e., the recovery threshold, is $3K+2T-2$.
The decoding process of the polynomial corresponds to decoding an $[N, 3K+2T-3, N-3K-2T+2]_p$ Reed-Solomon code with at most $D$ erasures and at most $A$ errors. 
In a similar way, we can define polynomials ${u}_{\mathcal{P}_k}$ for $k\in \{2,\ldots,G\}$ where user $i$ computes and sends ${u}_{\mathcal{P}_k}(\theta_i)$ to the server. 
The server can then decode all of the $N(N-1)/2$ pairwise distances by reconstructing the $G$ polynomials. 

After the server learns the pairwise distances, the generalized protocol follows the same steps as in Sections~\ref{sec:selection} and \ref{sec:secure_aggr}.
The overall algorithm of the generalized {\byz} framework is presented in Algorithm \ref{Alg_gene}.
% Having all above steps, the overall {\byz} framework can now be presented in Algorithm \ref{Alg_gene}.

The generalized {\byz} framework achieves a communication cost of $O(N^2)$  and computation cost of $O(N^2\log^2{N}\log\log{N})$ by setting $K=O(N)$, which follows from the following observations.
First, user $i$ sends an evaluation point of each polynomial $u_{\mathcal{P}_k}$ to the server for $k\in[G]$, and there are $N$ users and $G=O(N^2/K)=O(N)$ polynomials, which has $O(N^2)$ communication overhead in total.
Second, the computation cost to decode the polynomial $u_{\mathcal{P}_k}$ is $O(N\log^2{N}\log\log N)$~\cite{gao2003new} and there are $G=O(N)$ polynomials, which has a computation cost of $O(N^2\log^2{N}\log\log N)$ in total. We summarize the complexity of the generalized {\byz} protocol in Table \ref{tbl:complexity_genBrea}.
% The parameter $K$ provides a trade-offs between the complexity reduction and increasing the recovery threshold, the minimum number of results from the users to recover the pairwise distances.

\begin{algorithm}[t!]
% \footnotesize
% \small
  \caption{\color{black} Generalized {\byz}}\label{Alg_gene} 
  \begin{algorithmic}[1]
    \INPUT{Local dataset $\mathcal{D}_i$ of users $i\in[N]$, number of iterations $J$.} \
    \OUTPUT{Global model $(\mathbf{w}^{(J)})$.} \ 
    \vspace{0.1cm} 
    \FOR{iteration $t=0,\ldots,J-1$}
        \FOR{user $i=1,\ldots,N$}
            \STATE Download the global model ${\mathbf{w}}^{(t)}$ from the server.
            
            \STATE Create a local update ${\mathbf{w}}_{i}^{(t)}$ from \eqref{eq:local_model}. 
            
            \STATE Create the quantized model $\overline{\mathbf{w}}_i = \phi(q\cdot Q_q({\mathbf{w}}_i))$ from \eqref{eq:def_w_bar}.
            
            \STATE Generate secret shares $\{\mathbf{s}_{ij}=f_i(\theta_j)\}_{j\in[N]}$ from \eqref{eq:modified_VSS_gen} and send  $\mathbf{s}_{ij}$ to user $j\in[N]$.
            
            \STATE Generate commitments $\{\mathbf{c}_{ij}\}_{j\in[T]}$ from \eqref{eq:def:commitment} and broadcast $\{\mathbf{c}_{ij}\}_{j\in[T]}$ to all users.
            
            \STATE Verify the secret shares $\{\mathbf{s}_{ji}\}_{j\in[N]}$ by testing \eqref{eq:VSStest}.
            
            \STATE Compute $\{d^{(i)}_{jk}\}_{j,k\in[N]}$ from \eqref{eq:dist} and send the results to the server.
        
            % \vspace{0.1cm}\IF{$l=1$} 
            
            % \STATE Initialize  $\widetilde{\mathbf{s}}_i^{(1)}=\bar{\mathbf{s}}_i^{(1)} = \mathbf{0}$. 
            % \ELSE
            % \STATE Reconstruct the missing values in  $\{\widetilde{\mathbf{s}}_k^{(l-1)}\}_{k\in[N_{l-1}]}$ due to the dropped users in group $l-1$. 
            % \STATE Update the aggregate value $\widetilde{\mathbf{s}}_i^{(l)}$ from  \eqref{eq:sum_rx_data}. 
            % \STATE Compute the coded aggregate value $\bar{\mathbf{s}}_i^{(l)}$ from  \eqref{eq:sum_codedmodel}.     
            % \ENDIF
            % \STATE Send $\{\widetilde{\mathbf{x}}_{i,j}^{(l)},\! \bar{\mathbf{x}}_{i,j}^{(l)}, \! \widetilde{\mathbf{s}}_i^{(l)}\!\!, \bar{\mathbf{s}}_i^{(l)}\}$ to user $j\in[N_{l\!+\!1}]$ in group $l\!+\!1$ ($j\in[N_{final}]$ if $l=L$).   
        \ENDFOR
    
        \STATE Server recovers $\{\overline{d}_{jk}\}_{j,k\in[N]}$ in \eqref{eq:distance_finite_field} from the coefficients of polynomials ${u}_{\mathcal{P}_1},\ldots,{u}_{\mathcal{P}_G}$ in \eqref{eq:K_embed_poly}
        by utilizing Reed-Solomon decoding.
        % \STATE Server decodes polynomials $\{h_{jk}\}_{j,k\in[N]}$ in \eqref{eq:poly_dist} by utilizing Reed-Solomon decoder and recovers $\{\overline{d}_{jk}\}_{j,k\in[N]}$ in \eqref{eq:distance_finite_field}.
        
        \STATE Server converts $\{\overline{d}_{jk}\}_{j,k\in[N]}$ from the finite field to the real domain to obtain the pairwise distances $\{{d}_{jk}\}_{j,k\in[N]}$ from \eqref{eq:distance_real_domain}.
        
        \STATE Server selects the set $\mathcal{S}$ by utilizing the multi-Krum algorithm \cite{blanchard2017machine} based on the pairwise distances $\{{d}_{jk}\}_{j,k\in[N]}$.
        
        \STATE Server broadcasts the set $\mathcal{S}$ to all users.
    
        \FOR{user $i=1,\ldots,N$}
            \STATE Compute $\mathbf{s}_i = \sum_{j\in\mathcal{S}} \mathbf{s}_{ji}$ from \eqref{eq:SS_sum} and send the result to the server.
        \ENDFOR
        
        \STATE Server recovers $\sum_{i\in\mathcal{S}} \mathbf{w}^{(t)}_{i}$ from the computation results $\{\mathbf{s}_i\}_{i\in[N]}$ by utilizing  Reed-Solomon decoding.
        
        \STATE Server updates the global model, $\mathbf{w}^{(t+1)} = \mathbf{w}^{(t)} - \frac{\gamma^{(t)}}{q} \phi^{-1}\Big( \sum_{i\in\mathcal{S}} \overline{\mathbf{w}}_i^{(t)} \Big)$.
    \ENDFOR
%   \ENDFOR
    
    % \vspace{0.1cm}
    % \FOR{user $i=1,\ldots,N_{final}$}
    % \STATE Reconstruct the missing values in  $\{\widetilde{\mathbf{s}}_k^{(L)}\}_{k\in[N_L]}$ due to the dropped users in group $L$. 
    % \STATE Compute  $\widetilde{\mathbf{s}}_i^{(final)}$ from \eqref{eq:finalaggr} and $\bar{\mathbf{s}}_i^{(final)}$ from \eqref{eq:finalaggr_LCC}.
    
    % \STATE Send $\{\widetilde{\mathbf{s}}_i^{(final)}, \bar{\mathbf{s}}_i^{(final)}\}$ to the server. 
    % \ENDFOR

    % \EndProcedure
  \end{algorithmic}
\end{algorithm}

\section{Experiments} \label{Sec:experiments}
In this section, we demonstrate the convergence and resilience properties of {\byz} compared to conventional federated learning, i.e., the federated averaging scheme from \cite{pmlr-v54-mcmahan17a}, which is termed  \emph{FedAvg} throughout the section. We measure the performance in terms of the cross entropy loss evaluated over the training samples and the model accuracy evaluated over the test samples, with respect to the iteration index, $t$.

% In this section, we demonstrate the convergence and resilience properties of {\byz} compared to conventional federated learning, i.e., the federated averaging scheme from \cite{pmlr-v54-mcmahan17a}, which is termed  \emph{FedAvg} throughout the section. We measure the performance in terms of the cross entropy loss evaluated over the training samples and the model accuracy evaluated over the test samples, with respect to the iteration index, $t$.
% In this section, we demonstrate the convergence and resilience properties of {\byz} compared to the federated averaging scheme from \cite{pmlr-v54-mcmahan17a}, called \emph{FedAvg}, which is our performance baseline for  conventional federated learning. We measure the performance in terms of the cross entropy loss evaluated over the training samples and the model accuracy evaluated over the test samples, with respect to the iteration index, $t$.

% In this section, we demonstrate the convergence and resilience properties of {\byz} compared to the federated averaging scheme, called \emph{FedAvg}~\cite{pmlr-v54-mcmahan17a}, by measuring the performance as the cross entropy loss with training samples and the accuracy with the test samples with respect to the iteration index, $t$.

\begin{figure}[t]
\centering
\includegraphics[width=\linewidth]{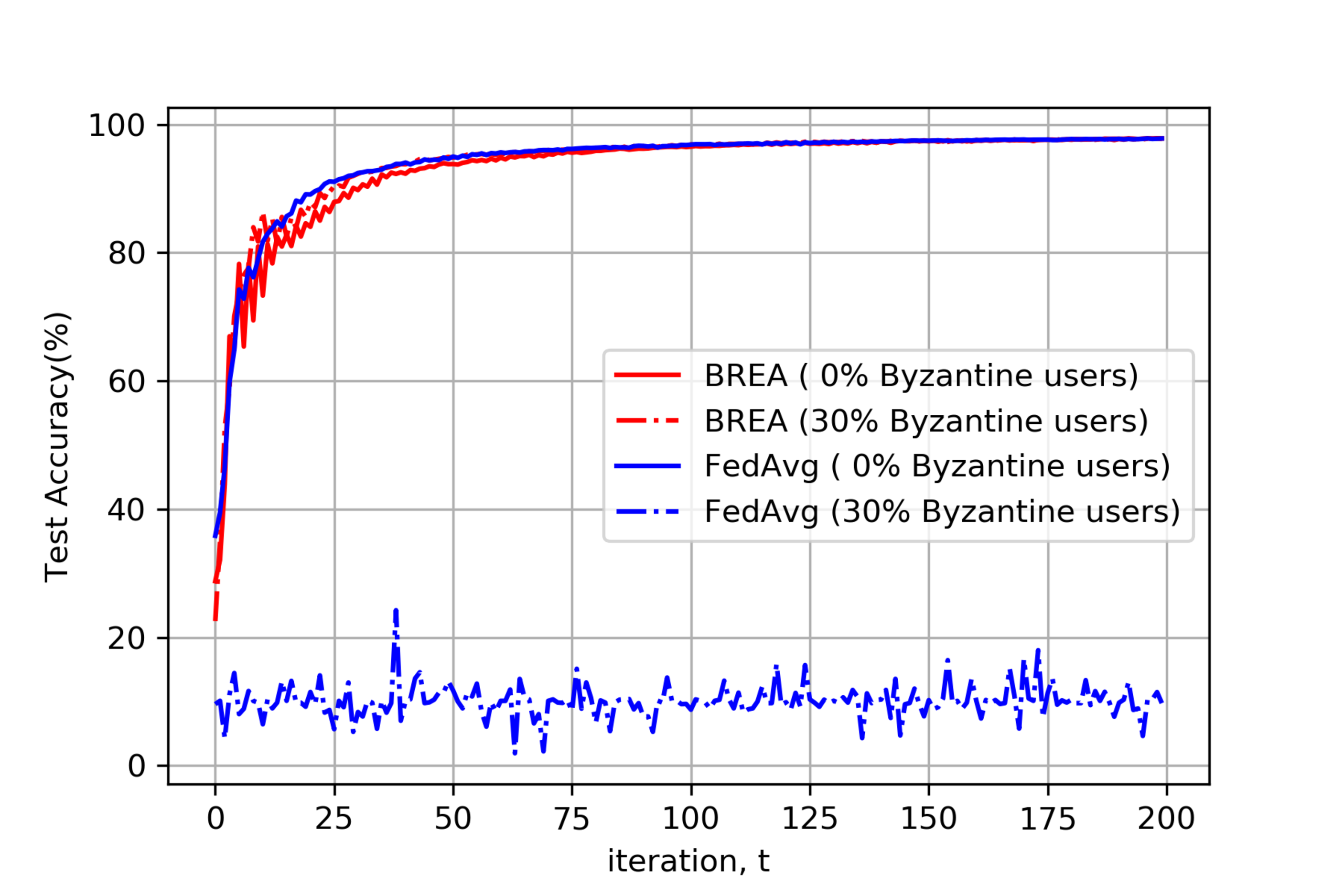}
% \vspace{-0.2cm}
\caption{Test accuracy of {\byz} and FedAvg~\cite{pmlr-v54-mcmahan17a} for different number of Byzantine users {\color{black}with i.i.d. MNIST dataset}.}
\label{fig:test_accuracy}
\vspace{-0.3cm}
\end{figure}

\begin{figure}[t]
\centering
\includegraphics[width=\linewidth]{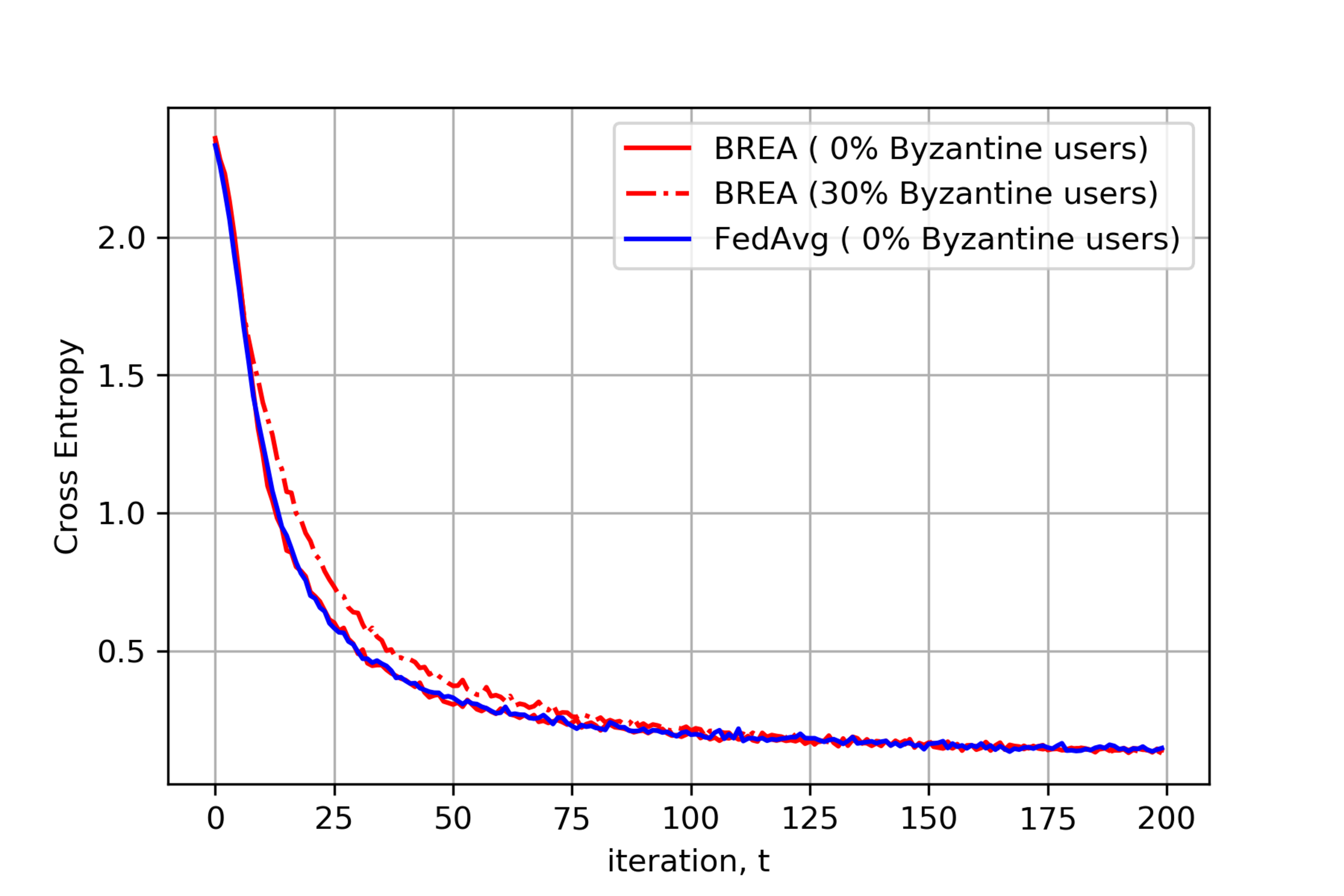}
\caption{Convergence of {\byz} and FedAvg~\cite{pmlr-v54-mcmahan17a} for different number of Byzantine users {\color{black}with i.i.d. MNIST dataset}.}
\label{fig:cross_entropy}
\vspace{-0.3cm}
\end{figure}

\begin{figure}[t]
\centering
\includegraphics[width=\linewidth]{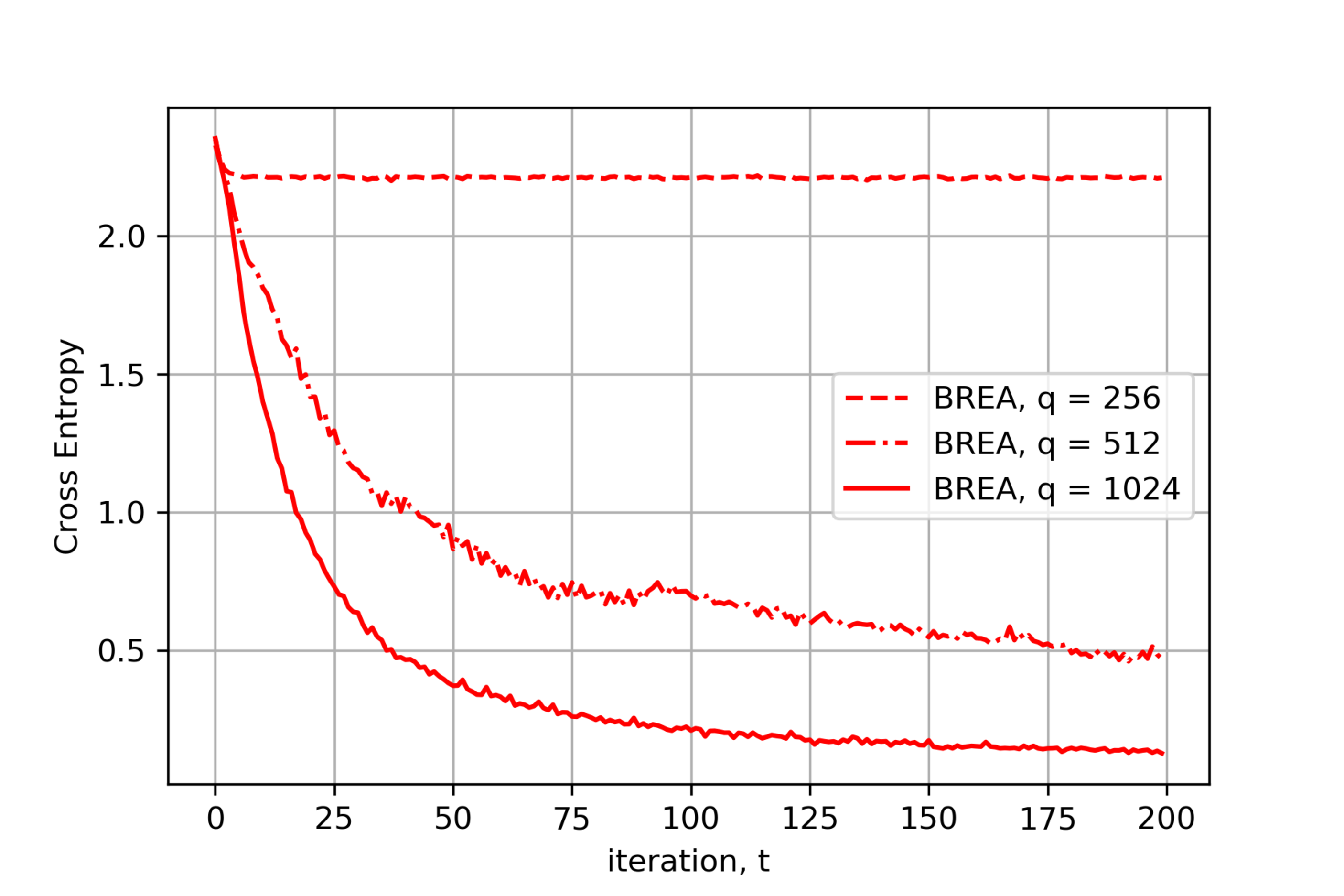}
\caption{Convergence of {\byz} for different values of the  quantization parameter $q$ in~\eqref{eq:sto_round} with $30\%$ Byzantine users {\color{black}and i.i.d. MNIST dataset}.}
\label{fig:cross_entropy_diff_q}
\vspace{-0.3cm}
\end{figure}

\vspace{0.1cm}
\noindent
{\bf Network architecture:}
We consider an image classification task with 10 classes on the MNIST dataset~\cite{lecun2010mnist}, and train a convolutional neural network with 6 layers~\cite{pmlr-v54-mcmahan17a} including two $5\times5$ convolutional layers with stride 1, where the first and the second layers have 32 and 64 channels, respectively, and each is followed by ReLu activation and $2\times2$ max pooling layer.
It also includes a fully connected layer with $1024$ units and ReLu activation followed by a final softmax output layer.

\vspace{0.1cm}
\noindent
{\bf Experiment setup:}
We assume a network of $N=40$ users where $T=7$ users may collude and $A$ users are malicious. We consider two cases for the number of Byzantine users: i) $0\%$ Byzantine users ($A=0$) and ii) $30\%$ Byzantine users ($A=12$). 
{\color{black} For i.i.d. data distribution, $60000$ training samples are shuffled and then partitioned into $N=40$ users each receiving $1500$ samples.}
Honest users utilize the ADAM optimizer~\cite{kingma2014adam} to update the local update by setting the size of the local mini-batch sample to $|\xi_i^{(t)}|=500$ for all $i\in[N],t\in[J]$ where $J$ is the total number of iterations.
% Correct users utilize ADAM optimizer~\cite{kingma2014adam} to update the local update by setting the size of the local mini-batch sample to $|\xi_i^{(t)}|=500$ for all $i\in[N],t\in[J]$ where $J$ is the number of total iterations. 
Byzantine users generate vectors uniformly at random from $\mathbb{F}_p^{d}$ where we set the field size $p=2^{32} - 5$, which is the largest prime within $32$ bits.
For both schemes, {\byz} and FedAvg, the number of models to be aggregated is set to $m=|\mathcal{S}|=13<N-2A-2$. FedAvg randomly selects $m$ models at each iteration while {\byz} selects $m$ users from~\eqref{eq:mKrum_iter}.

\vspace{0.1cm}
\noindent
{\bf Convergence and robustness against Byzantine users:}
Figure \ref{fig:test_accuracy} shows the test accuracy of {\byz} and FedAvg for different number of Byzantine users. 
We can observe that {\byz} with $0\%$ and $30\%$ Byzantine users is as efficient as FedAvg with $0\%$ Byzantine users, while FedAvg does not tolerate Byzantine users. 
Figure \ref{fig:cross_entropy} presents the cross entropy loss for {\byz} versus FedAvg for different number of Byzantine users. We omit the FedAvg with $30\%$ Byzantine users as it diverges. 
We observe that {\byz} with $30\%$ Byzantine users achieves convergence with comparable rate to FedAvg with $0\%$ Byzantine users, while providing robustness against Byzantine users and being privacy-preserving. For all cases of {\byz} in Figures~\ref{fig:test_accuracy} and \ref{fig:cross_entropy}, we set the quantization value in~\eqref{eq:sto_round} to $q=1024$.
Figure~\ref{fig:cross_entropy_diff_q} further illustrates the cross entropy loss of {\byz} for different values of quantization parameter $q$. We can observe that {\byz} with a larger value of $q$ has better performance because the variance caused by the quantization function $Q_q$ defined in~\eqref{eq:sto_round} gets smaller as $q$ increases. On the other hand, given the field size $p$, the quantization parameter $q$ should be less than a certain threshold to ensure~\eqref{eq:cond2_fieldsize} holds.

{\color{black}
\noindent
{\bf Non-i.i.d dataset:}
To investigate the performance of {\byz} with non-i.i.d dataset, we partition the MNIST data over $N=100$ users by using the same non-i.i.d setting of \cite{pmlr-v54-mcmahan17a}. In this setting, $60000$ training samples are sorted by their} {\color{black} digit labels and partitioned into $200$ shards of size $300$. Then, $N=100$ users are considered, where each user receives two shards with different digit labels.
We assume $T=20$ users may collude and consider two cases for the number of Byzantine users: i) $0\%$ Byzantine users ($A=0$) and ii) $20\%$ Byzantine users ($A=20$). 
Honest users utilize the ADAM optimizer to update the model by setting the size of the local mini-batch sample to $|\xi_i^{(t)}|=10$ and local iterations to $5$ (i.e., each user locally updates the model $5$ times before sending to the server at each iteration).
Byzantine users generate vectors uniformly at random from $\mathbb{F}_p^{d}$.
For both schemes, {\byz} and FedAvg, the number of models to be aggregated is set to $m=50$.
Figure \ref{fig:noniid_test_accuracy} shows the test accuracy of {\byz} and FedAvg for different number of Byzantine users.
Even for the non-i.i.d. dataset, {\byz} with $0\%$ and $20\%$ Byzantine users empirically demonstrates comparable test accuracy to FedAvg with $0\%$ Byzantine users.
}

\begin{figure}[t]
\centering
\includegraphics[width=\linewidth]{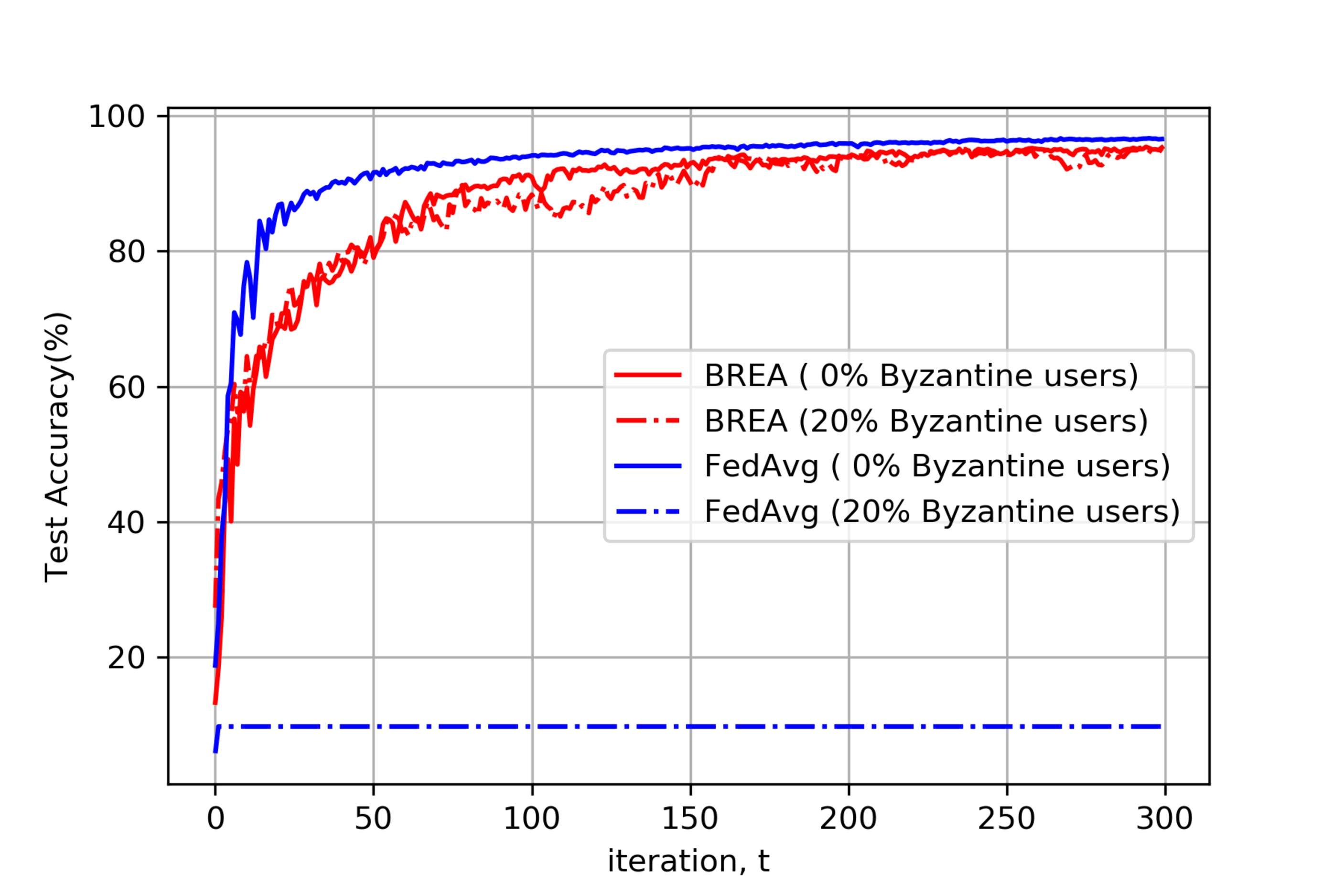}
\vspace{-0.5cm}
\caption{\color{black}Test accuracy of {\byz} and FedAvg~\cite{pmlr-v54-mcmahan17a} for different number of Byzantine users with non-i.i.d MNIST dataset.}
\label{fig:noniid_test_accuracy}
\vspace{-0.3cm}
\end{figure}

\begin{figure}[t]
\centering
\includegraphics[width=\linewidth]{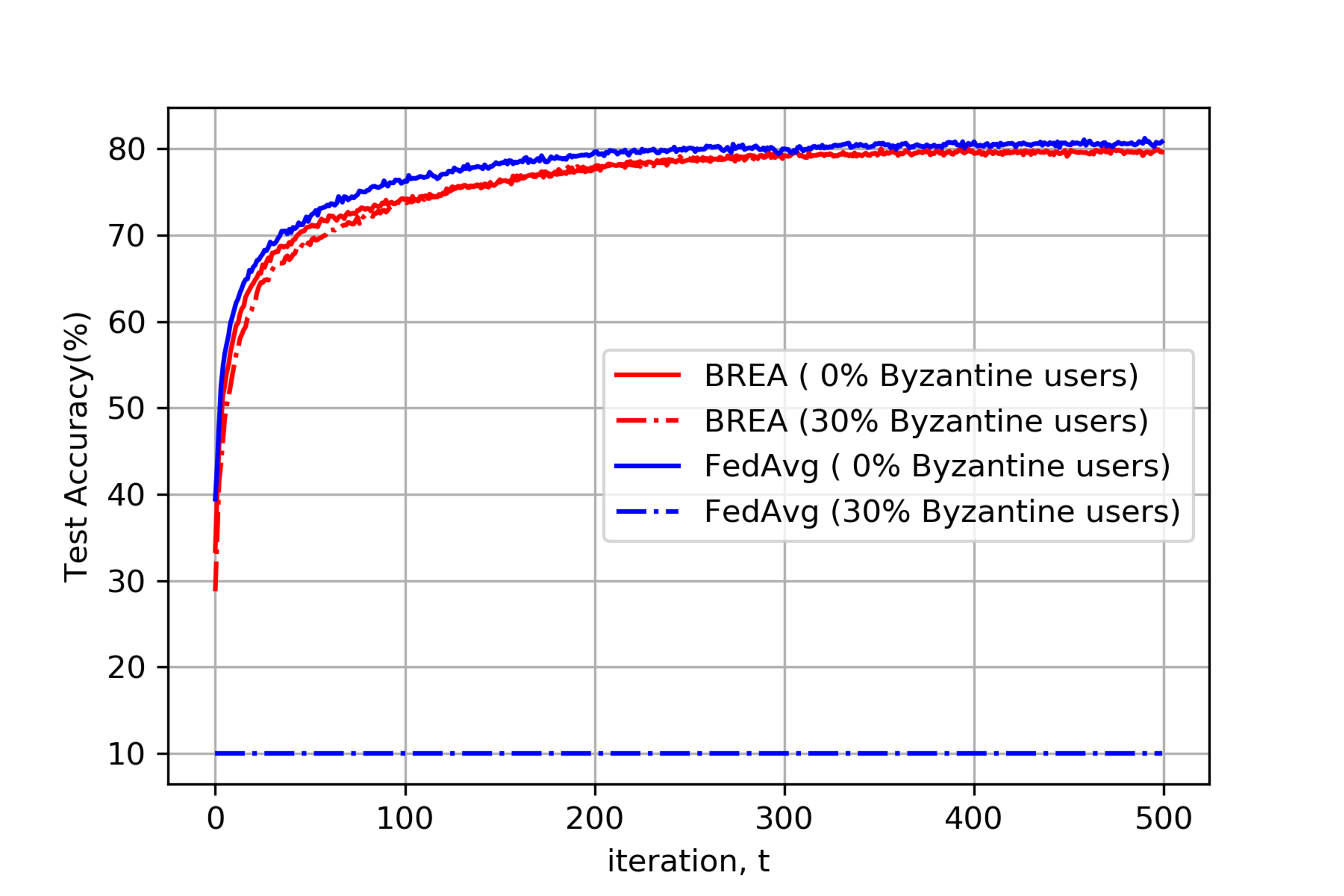}
\vspace{-0.5cm}
\caption{Test accuracy of BREA and FedAvg [1] for different number of Byzantine users with CIFAR-10 dataset.}
\vspace{-0.3cm}
\label{fig:cifar_test_accuracy}
\end{figure}

{\color{black}
\noindent
{\bf CIFAR-10 dataset:}
To investigate the performance of {\byz} with a larger dataset, we additionally experiment with CIFAR-10~\cite{krizhevsky2009learning}, by using the same model architecture of \cite{mcmahan2016communication} (about $10^6$ model parameters).
We use the same setting for parameters $N,A,m,p$ as the setting of i.i.d MNIST dataset.
Figure~\ref{fig:cifar_test_accuracy} shows that the BREA algorithm (with $30\%$ Byzantine users) has comparable test accuracy to the FedAvg algorithm (with no Byzantine users).
}

\section{Conclusion} \label{Sec:conclusion} 
This paper presents the first single-server solution for Byzantine-resilient secure federated learning. 
Our framework is based on a verifiable secure outlier detection strategy to guarantee robustness of the trained model against Byzantine faults, while protecting the privacy of the individual users. We provide the theoretical convergence guarantees and the fundamental performance trade-offs of our framework, in terms of the number of Byzantine adversaries and the user dropouts the system can tolerate. 
In our experiments, we have implemented our system in a distributed network by using two ways of partitioning the MNIST dataset over up to $N=100$ users: i.i.d. and non-i.i.d. data distribution. For both settings, we numerically demonstrated the convergence behaviour while providing robustness against Byzantine users and being privacy-preserving.
% protecting the privacy of individual users.  
% The main limitation of our framework is that its communication overhead is $O(N^3)$, since each user sends $N^2$ distance values to the server. Although the distances are scalar valued, the cubic overhead can become a limitation for very large-scale networks.   
% Future directions include developing single-server Byzantine-resilient secure learning architectures with efficient communication structures.
Future directions include developing single-server Byzantine-resilient secure learning architectures considering the heterogeneous environments in terms of computation and communication resources, developing efficient communication architectures, and quantization techniques.

In this paper, we utilize the distance based outlier detection approach to guarantee the robustness against Byzantine users. 
There have been many outlier detection approaches for Byzantine resilient SGD methods (or federated learning).
% The basic idea behind these approaches is to take the median of a batch of stochastic gradients, which is statistically more stable than taking the average. 
% There have been many other outlier detection approaches such as coordinate-wise median based algorithms \cite{pmlr-v80-yin18a, alistarh2018byzantine, yang2019byzantine}.
Existing works can be roughly classified into two main categories: 1) geometric median \cite{blanchard2017machine, chen2017distributed} and 2) coordinate-wise median \cite{pmlr-v80-yin18a, alistarh2018byzantine, yang2019byzantine}.
Our framework utilizes the geometric median based approach because it is more efficient in the secure domain. 
% First, in practice, the geometric median based approaches usually have better performance in terms of classification accuracy than those of the coordinate-wise median based approaches [13].
In order to preserve the privacy of local updates, the coordinate-wise median based approaches require a  multiparty secure comparison protocol \cite{kerschbaum2009performance} to find the median value of each coordinate. 
% Secure comparison protocol based on secret shares has the round complexity of $O(\log{N})$, communication complexity of $O(N^2 \log{N})$, and computation complexity of $O(N \log{N})$ \cite{kerschbaum2009performance}.
On the other hand, in our framework, the server compares the pairwise distances in plaintext, which is much more efficient than the secure comparison protocol.
Extending the coordinate-wise median based algorithms to the secure domain would be a very interesting future direction.

% Our framework has provable convergence guarantees. 
% user dropout rate could be 10% of users dropping (D=0.1N), 20%, 30% etc. 

\vspace{-0.25cm}
\section*{Acknowledgement}
\vspace{-0.1cm}
This material is based upon work supported by Defense Advanced Research Projects Agency (DARPA) under Contract No. HR001117C0053, ARO award W911NF1810400, NSF grants CCF-1703575 and CCF-1763673, ONR Award No. N00014-16-1-2189, and research gifts from Intel and Facebook. The views, opinions, and/or findings expressed are those of the author(s) and should not be interpreted as representing the official views or policies of the Department of Defense or the U.S. Government.

\bibliographystyle{IEEEtran}
\bibliography{main.bbl}

\appendix
\subsection{The {\byz} Framework with Local Updating Schemes} \label{App:localupdating}

In this section, we present how {\byz} framework can work with local updating schemes where each user locally takes multiple-steps of stochastic gradient descent using its local data before sending the updated model to server.
% , and provide convergence analysis of the {\byz} framework with local updating schemes.

The {\byz} framework with local updating schemes follows the same steps from Section \ref{sec:quantization} to Section \ref{sec:secure_aggr} by changing the local update $\mathbf{w}_i^{(t)}$ of user $i$ from a gradient estimate to the updated model by using multiple-steps of stochastic gradient descent (SGD).
Then, the update equation of a benign user $i$ can be expressed as follows,
\begin{align}
    &\mathbf{v}_i^{(t+1)} = \mathbf{w}_i^{(t)}-\gamma_t g(\mathbf{w}_i^{(t)},\mathbf{\xi}_i^{(t)}) \\
    &\mathbf{w}_i^{(t+1)} = 
        \left\{
        \begin{array}{ll}
              \frac{1}{m}\sum_{i\in\mathcal{S}^{(t+1)}}\mathbf{v}_i^{(t+1)} 
              &\text{if } t+1 \in {\mathcal{J}_E}\\ 
              \mathbf{v}_i^{(t+1)} 
              &\text{if } t+1 \notin {\mathcal{J}_E}\\ 
        \end{array}
        \right.
\end{align}
% \begin{align}
%     \mathbf{w}_i^{(t+1)} = 
%         \left\{
%         \begin{array}{ll}
%               \frac{1}{m}\sum_{i\in\mathcal{S}^{(t+1)}}(\mathbf{w}_i^{(t)}-\gamma g(\mathbf{w}_i^{(t)},\mathbf{\xi}_i^{(t)})) 
%               &\text{if } t+1 \in {\mathcal{J}_E}\\ 
%               \mathbf{w}_i^{(t)}-\gamma g(\mathbf{w}_i^{(t)},\mathbf{\xi}_i^{(t)}) 
%               &\text{if } t+1 \notin {\mathcal{J}_E}\\ 
%         \end{array}
%         \right.
% \end{align}
where $\mathcal{J}_E$ denotes the set of global synchronization steps, i.e., $\mathcal{J}_E=\{nE\;|\;n=1,2,\cdots\}$, $E$ is the number of local iterations, and $\mathcal{S}^{(t+1)}$ is the index set of selected users at iteration $t+1$ by utilizing the multi-Krum algorithm described in Section \ref{sec:selection}.
An additional vector $\mathbf{v}_i^{(t+1)}$ is introduced to represent the immediate result of one step SGD update from $\mathbf{w}_i^{(t)}$.
Malicious user $i$ can send any random vector $\mathbf{b}_i^{(t+1)}\in\mathbb{R}^d$, which we represent as $\mathbf{v}_i^{(t+1)}=\mathbf{b}_i^{(t+1)}$.
We note that the distance-based outlier detection algorithm of which inputs are the local models instead of local gradient works effectively as the distance between the local models is equal to the multiplication of the learning rate $\gamma$ and the distance between the local gradients.
Consequently, the aggregation step at the server side can be represented as 
\begin{equation}\label{eq:output_BREA}
    \mathbf{w}^{(t)} = \sum_{i\in\mathcal{S}^{(t)}} \mathbf{w}_i^{(t)} \text{\;\;\;if } t \in {\mathcal{J}_E}.
\end{equation}

\end{document}